\newcommand{\resetCurThmBraces}{%
\gdef\curThmBraceOpen{(}%
\gdef\curThmBraceClose{)}}
\newcommand{\removeThmBraces}{%
\gdef\curThmBraceOpen{}%
\gdef\curThmBraceClose{}}
\newenvironment{notheorembrackets}{\removeThmBraces}{\resetCurThmBraces}
\patchcmd{\thmhead}{(#3)}{\curThmBraceOpen #3\curThmBraceClose }{}{}
\newcommand{\hackynewpage}{%
    \pagebreak%
}
\tikzset{shiftarr/.style={
        rounded corners,%
        to path={--([#1]\tikztostart.center)
                     -- ([#1]\tikztotarget.center) \tikztonodes
                     -- (\tikztotarget)},
}}
\newsavebox{\mypullbackcorner}%
\sbox{\mypullbackcorner}{%
\begin{tikzpicture}
    \draw[-] (0,0) -- (.5em,.5em) -- (0,1em);
\end{tikzpicture}%
}
\newcommand{\pullbackangle}[2][]{\arrow[phantom,to path={
                     -- ($ (\tikztostart)!1cm!#2:([xshift=8cm]\tikztostart) $)
                        node[anchor=west,pos=0.0,rotate=#2,
                        inner xsep = 0]
                        {\begin{tikzpicture}[minimum
                        height=1mm,baseline=0,#1]
    \draw[-] (0,0) -- (.5em,.5em) -- (0,1em);
                        \end{tikzpicture}}}]{}}
\tikzset{
  coalgebra/.style={
    block line/.style={
      draw=black!50,
      line width=1.2pt,
    },
    block/.style={
      block line,
      rounded corners=3pt,
      inner sep=1pt,
      minimum height=6mm,
      minimum width=6mm,
    },
    scissors line/.style={
      draw=black!50,
      text=black!50,
      font=\footnotesize,
      line width=0.8pt,
      shorten <= -4pt,
      shorten >= -4pt,
      dotted,
    },
    state/.style={
      text depth=0pt,
      outer sep=0pt,
      inner sep=4pt,
    },
    transition/.style={
      -{latex},
      line width=0.8pt,
      draw=black,
      preaction = {draw,-,draw=white,line width=4.6pt,line cap=round},
    },
    path with edges/.style={
      every edge/.append style={transition}
    },
  },
}
\newcommand{\C}{\ensuremath{\mathcal{C}}}
\newcommand{\E}{\ensuremath{\mathcal{E}}}
\newcommand{\R}{\ensuremath{\mathbb{R}}}
\newcommand{\N}{\ensuremath{\mathbb{N}}}
\newcommand{\K}{\mathcal{K}}%
\newcommand{\D}{\ensuremath{\mathcal{D}}}
\newcommand{\CO}{\ensuremath{\mathcal{O}}}
\newcommand{\M}{\ensuremath{\mathcal{M}}}
\newcommand{\Set}{\ensuremath{\mathsf{Set}\xspace}}
\newcommand{\Pow}{\ensuremath{\mathcal{P}}}
\newcommand{\Bag}{\ensuremath{\mathcal{B}}}
\newcommand{\colim}{\ensuremath{\operatorname{colim}}}
\newcommand{\dual}{{\ensuremath{\operatorname{\mathsf{op}}}}}
\newcommand{\Coalg}{\ensuremath{\mathsf{Coalg}}}
\newcommand{\Alg}{\ensuremath{\mathsf{Alg}}}
\newcommand{\StrongEpi}{\ensuremath{\mathsf{StrongEpi}}}
\newcommand{\StrongMono}{\ensuremath{\mathsf{StrongMono}}}
\newcommand{\Mor}{\ensuremath{\mathsf{Mor}}}
\newcommand{\Iso}{\ensuremath{\mathsf{Iso}}}
\newcommand{\Epi}{\ensuremath{\mathsf{Epi}}}
\newcommand{\Mono}{\ensuremath{\mathsf{Mono}}}
\newcommand{\reach}{\ensuremath{\mathsf{reach}}}
\newcommand{\pr}{\ensuremath{\mathsf{pr}}}
\newcommand{\inj}{\ensuremath{\mathsf{inj}}}
\newcommand{\id}{\ensuremath{\mathsf{id}}}
\newcommand{\Id}{\ensuremath{\mathsf{Id}}}
\renewcommand{\Im}{\ensuremath{\mathsf{Im}}}
\newcommand{\monoto}{\ensuremath{\rightarrowtail}}
\newcommand{\hookto}{\ensuremath{\hookrightarrow}}
\newcommand{\epito}{\ensuremath{\twoheadrightarrow}}
\newcommand{\takeout}[1]{\relax}
\newcommand{\itemref}[2]{\autoref{#1}.\ref{#2}}
\newcommand{\textqt}[1]{`#1'}
\newcommand{\singlequote}[1]{`#1'}
\newcommand{\etal}{\text{et al.}}
\newcommand{\Adamek}{Ad{\'{a}}mek}
\newcommand{\textshaded}[1]{\text{#1}}
\newenvironment{proofappendix}[2][Proof of]{%
\subsubsection*{#1~\autoref{#2}}%
\addcontentsline{toc}{subsection}{#1~\autoref{#2}}%
}{}
\newenvironment{listinenv}{\setlength{\parskip}{0.2em plus 0.1em minus
    0em}\leavevmode}%
{\setlength{\parskip}{0.0em plus 0.6em minus 0.0em}}
\newcommand\mynobreakpar{\par\nobreak\@afterheading} 
\newcommand{\descto}[3][]{\arrow[phantom]{#2}[#1]{\text{\footnotesize{}\begin{tabular}{c}#3\end{tabular}}}}
\theoremstyle{plain}
\theoremstyle{definition}
\newtheorem{assumption}[theorem]{Assumption}
\newtheorem{instance}[theorem]{Instance}
\title{Minimality Notions via Factorization Systems}
\titlerunning{Minimality Notions via Factorization Systems} %
\author{Thorsten Wißmann}{Radboud University, Nijmegen, The Netherlands\and\url{https://thorsten-wissmann.de}}{thorsten.wissmann@ru.nl}{https://orcid.org/0000-0001-8993-6486}{%
}
\authorrunning{T.~Wißmann}
\keywords{Coalgebra, Reachability, Observability, Minimization, Factorization System}
\begin{document}
\maketitle
\begin{abstract}
  For the minimization of state-based systems (i.e.\ the reduction of the number
  of states while retaining the system's semantics), there are two obvious
  aspects: removing unnecessary states of the system and merging redundant
  states in the system. In the present article, we relate the two aspects on
  coalgebras by defining an abstract notion of minimality.

  The abstract notion minimality and minimization live in a general category
  with a factorization system. We will find criteria on the category that ensure
  uniqueness, existence, and functoriality of the minimization aspects. The
  proofs of these results instantiate to those for reachability and observability minimization
  in the standard coalgebra literature. Finally, we will see how the two aspects
  of minimization interact and under which criteria they can be sequenced in any
  order, like in automata minimization.
\end{abstract}
\section{Introduction}
Minimization is a standard task in computer science that comes in different
aspects and lead to various algorithmic challenges. The task is to reduce the
size of a given system while retaining its semantics, and in general there are
two aspects of making the system smaller: 1.~merge redundant parts of the system
that exhibit the same behaviour (\emph{observability}) and 2.~omit
unnecessary parts (\emph{reachability}). Hopcroft's automata minimization
algorithm~\cite{Hopcroft71} is an early example: in a given deterministic
automaton, 1.~states accepting the same language are identified and
2.~unreachable states are removed. Moreover, Hopcroft's algorithm runs in
quasilinear time; for an automaton with $n$ states, reachability is computed 
in $\CO(n)$ and observability in $\CO(n\log n)$.

Since the reachability is a simple depth-first search, it is straightforward to
apply it to other system types. On the other hand, it took decades until
quasilinear minimization algorithms for observability were developed for other
system types such as transition systems~\cite{PaigeT87}, labelled transition
systems~\cite{DovierPP04,Valmari09}, or Markov
chains~\cite{DerisaviHS03,ValmariF10}. Though their differences in complexity,
the aspects of observability and reachability have very much in common when
modelling state-based systems as coalgebras. Then, observability is the task to
find the greatest coalgebra quotient and reachability is the task of finding the
smallest subcoalgebra containing the initial state, or generally, a
distinguished point of interest.

In the present article, we define an abstract notion of minimality and
minimization in a category with an $(\E,\M)$-factorization system. Such a
factorization systems gives rise to a generalized notion of quotients and subobjects, and
the minimization is the task of finding the least quotient resp.~subobject.
To make this general setting applicable to coalgebras, we show that the category of
coalgebras inherits the factorization system from the base category under a mild
assumption (namely that the functor preserves $\M$).
Dually, a factorization system also lifts to algebras, and even to the
Eilenberg-Moore category. 

Then, we will present different characterizations of minimality
(\autoref{figCharacterizations}) and then study properties of minimizations,
e.g.~under which criteria they exist and are unique, rediscovering the
respective proofs for reachability and observability for coalgebras in the
literature~\cite{amms13,Gumm03}. When combining the two minimization aspects, we
discuss under which criteria reachability and observability can be computed in
arbitrary order.

The goal of the present work is not only to show the connections between
existing minimality notions, but also to provide a series of basic results that
can be used when developing new minimization techniques or even new notions of
minimality.

\paragraph*{Related Work}
There is a series of
works~\cite{BidoitHK01,BezhanishviliKP12,BonchiBHPRS14,Rot16} that studies the
minimization of coalgebras by their duality to algebras. In those works, the
correspondence between observability in coalgebras and reachability in algebras
is used. For instance, Rot~\cite{Rot16} relates the final sequence (for
observability in coalgebras) with the initial sequence (for reachability in
algebras). In the present paper however, we consider both observability and
reachability on an abstract level that work for a general factorization system
and discuss their instance in coalgebras. The paper is based on Chapter 7 of
the author's PhD dissertation~\cite{Wissmann2020}.

If not included in the main text, detailed proofs of all results can be found in
the appendix%
\ifthenelse{\boolean{withappendix}}{%
}{ %
  of the full version%
}, both for the standard results recalled in the preliminaries and the new
results of the main sections.

\section{Preliminaries}
In the following, we assume basic knowledge of category theory~(cf.~standard
textbooks~\cite{joyofcats,awodey2010category}).

Given a diagram $D\colon \D\to \C$ (i.e.~a functor $D$ from a small category
$\D$), we denote its limit by $\lim D$ and colimit by $\colim D$ -- if they
exist. The limit projections, resp.~colimit injections, are denoted by
\[
  \pr_i\colon \lim D\to Di
  \qquad
  \inj_{i}\colon Di\to \colim D
  \qquad
  \text{for }i\in \D.
\]

\subsection{Coalgebra}

We model state-based systems as
coalgebras for an endofunctor $F\colon \C\to \C$ on a category $\C$:

\begin{definition}
  An \emph{$F$-coalgebra} (for an endofunctor $F\colon \C\to \C$) is a pair
  $(C,c)$ consisting of an object $C$ (of $\C$) and a morphism $c\colon C\to FC$
  (in $\C$). An $F$-coalgebra morphism $h\colon (C,c)\to (D,d)$ between
  $F$-coalgebras $(C,c)$ and $(D,d)$ is a morphism $h\colon C\to D$ with $d\cdot
  h = Fh\cdot c$ (see \autoref{fig:coalghom} on p.~\pageref{fig:coalghom} for the corresponding commuting diagram).
\end{definition}

Intuitively, the \emph{carrier} $C$ of a coalgebra $(C,c)$ is the state space
and the morphism $c\colon C\to FC$ sends states to
their possible next states. The functor of choice $F$ defines how these possible
next states $FC$ are structured.

\hackynewpage
\begin{example}
  Many well-known system-types can be phrased as coalgebras:
  \begin{itemize}
  \item Deterministic automata (without an explicit initial state) are
    coalgebras for the \Set-functor $FX = 2\times X^A$, where $A$ is the set of
    input symbols. In an $F$-coalgebra $(C,c)$, the first component of $c(x)$
    denotes the finality of the state $x\in C$ and the second component is the
    transition function $A\to C$ of the automaton.
  \item Labelled transition systems are coalgebras for the \Set-functor
    $FX=\Pow(A\times X)$ and the coalgebra morphisms preserve bisimilarity.
  \item Weighted systems with weights in a commutative monoid $(M,+,0)$ (and
    finite branching) are coalgebras for the \emph{monoid-valued functor}~\cite[Def.~5.1]{GummS01}
    \[
      M^{(X)} = \{\mu \colon X\to M\mid \mu(x) =0\text{ for all but finitely many
      }x\in X\}
    \]
    which sends a map $f\colon X\to Y$ to the map
    \[
    M^{(f)}\colon M^{(X)}\to M^{(Y)}
    \qquad M^{(f)}(\mu)(y) =\sum \{ \mu(x) \mid x\in X, f(x) = y\}
    \]
    In an $M^{(-)}$-coalgebra $(C,c)$, the transition weight from state $x\in C$
    to $y\in C$ is given by $c(x)(y) \in M$.
    E.g.~one obtains real-valued weighted systems as coalgebras for the functor
    $(\R,+,0)^{(-)}$.
  \item The \emph{bag} functor is defined
    by $\Bag X = (\N,+,0)^{(X)}$. Equivalently, $\Bag X$ is the set of finite
    multisets on $X$.
    Its coalgebras can be viewed as weighted systems or as transition systems
    in which there can be more than one transition between two states.
  \item A wide range of probabilistic and weighted systems can be obtained as
    coalgebras for respective distribution functors, see e.g.~Bartels
    \etal~\cite{BartelsSV04}.
  \end{itemize}
\end{example}

\begin{definition}
  The category of $F$-coalgebras and their morphisms is denoted by $\Coalg(F)$.
\end{definition}

Intuitively, the coalgebra morphisms preserve the behaviour of states:
\begin{definition}
  \label{defBehEq}
  In \Set, two states $x,y\in C$ in an $F$-coalgebra $(C,c)$ are
  \emph{behaviourally equivalent} if there is a coalgebra homomorphism
  $h\colon (C,c)\to (D,d)$ with $h(x) = h(y)$.
\end{definition}
\begin{example}
  \begin{itemize}
  \item For deterministic automata ($FX=2\times X^A$), states are behaviourally
    equivalent iff they accept the same language~\cite[Example 9.5]{Rutten00}.
    Indeed, for sufficiency, if two states $x,y$ in an $F$-coalgebra are
    identified by a coalgebra homomorphism, then one can show by induction over
    input words $w\in A^*$ that either both states or neither of them accepts
    $w$. For necessity, note that the map sending states to their semantics
    $C\to \Pow(A^*)$ is a coalgebra homomorphism.
  \item For labelled transition systems ($FX=\Pow(A\times X)$), states are
    behaviourally equivalent iff they are bisimilar~\cite{aczelmendler:89}.
  \item For weighted systems, i.e.~coalgebras for $M^{(-)}$, the coalgebraic
    behavioural equivalence captures weighted bisimilarity~\cite{KlinS13}.
  \item Further semantic notions can be modelled with coalgebras by changing the
    base category from $\C=\Set$ to the Eilenberg-Moore~\cite{TuriP97} or Kleisli
    category~\cite{HasuoJS06} of a monad, to nominal
    sets~\cite{KurzPSV13,MiliusSW16}, or to partially ordered sets~\cite{BalanK11}.
  \end{itemize}
\end{example}

The category of coalgebras inherits many properties from the base-category $\C$.
For instance, we have the following standard result:
\begin{corollary}
  \label{cor:coalg-colims}
  The forgetful functor $\Coalg(F)\to \C$ creates all colimits. That is, the
  colimit of a diagram $D\colon \D\to \Coalg(F)$ exists, if $U\cdot D\colon
  \D\to \C$ has a colimit, and moreover, there is a unique coalgebra structure on
  $\colim(UD)$ making it the colimit of $D$ and making the colimit injections of
  $\colim D$ coalgebra morphisms.
\end{corollary}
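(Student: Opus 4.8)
The plan is to transport the coalgebra structures along the colimit injections and to invoke the universal property of the colimit twice: once to \emph{construct} the structure map on $\colim(UD)$, and once to verify that the mediating morphisms induced in $\C$ are automatically coalgebra homomorphisms. Throughout, write $Di = (C_i,c_i)$ for $i\in\D$, assume $U\cdot D$ has a colimit $C=\colim(UD)$ with injections $\inj_i\colon C_i\to C$, and recall that these injections form a jointly epimorphic cocone.

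First I would build the structure map. For each $i$ consider $F\inj_i\cdot c_i\colon C_i\to FC$ and check that the family $(F\inj_i\cdot c_i)_i$ is a cocone over $U\cdot D$ with tip $FC$: for a morphism $f\colon i\to j$ in $\D$ the component $Df$ is a coalgebra morphism, so $c_j\cdot Df = F(Df)\cdot c_i$, whence $F\inj_j\cdot c_j\cdot Df = F(\inj_j\cdot Df)\cdot c_i = F\inj_i\cdot c_i$, using that $\inj$ is a cocone. By the universal property of $C=\colim(UD)$ there is then a unique $c\colon C\to FC$ with $c\cdot\inj_i = F\inj_i\cdot c_i$ for all $i$. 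This equation is precisely the assertion that each $\inj_i\colon (C_i,c_i)\to(C,c)$ is a coalgebra morphism, and by uniqueness of the mediating morphism, $c$ is the \emph{only} coalgebra structure on $C$ turning every injection into a coalgebra morphism.

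Next I would show $(C,c)$ is the colimit of $D$ in $\Coalg(F)$. Given any cocone $(h_i\colon(C_i,c_i)\to(E,e))_i$ in $\Coalg(F)$, its underlying morphisms form a cocone over $U\cdot D$, so there is a unique $h\colon C\to E$ in $\C$ with $h\cdot\inj_i = h_i$. The only point to check is that $h$ is a coalgebra morphism, i.e.\ $e\cdot h = Fh\cdot c$, and I would establish this by precomposing both sides with each $\inj_i$: since $h_i$ is a coalgebra morphism, $e\cdot h\cdot\inj_i = e\cdot h_i = Fh_i\cdot c_i$, whereas $Fh\cdot c\cdot\inj_i = Fh\cdot F\inj_i\cdot c_i = F(h\cdot\inj_i)\cdot c_i = Fh_i\cdot c_i$. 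As the injections $\inj_i$ are jointly epic, the two composites agree, so $e\cdot h = Fh\cdot c$; uniqueness of $h$ as a coalgebra morphism is inherited from its uniqueness in $\C$.

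The argument is ultimately a diagram chase, and the one genuinely load-bearing fact — the feature that makes $U$ \emph{create} rather than merely preserve colimits — is the joint epimorphicity of the colimit injections, since this is exactly what forces both the uniqueness of the lifted structure $c$ and the coalgebra-homomorphism property of the mediating map $h$. Everything else is the routine bookkeeping of verifying that the relevant squares commute.
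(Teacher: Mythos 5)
Your proof is correct and takes essentially the same route as the paper's: build the structure map as the unique mediating morphism for the cocone $(F\inj_i\cdot c_i)_{i\in\D}$, note this makes the injections homomorphisms, and then check that the mediating map of any competing cocone is itself a coalgebra homomorphism. The only cosmetic difference is in that last step — you conclude $e\cdot h = Fh\cdot c$ from joint epimorphicity of the colimit injections, while the paper observes that both composites are cocone morphisms into the cocone $(e\cdot m_i)_{i\in\D}$ and invokes uniqueness of mediating maps; these are the same fact about colimits phrased two ways.
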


On the other hand, we do not necessarily have all limits in $\Coalg(F)$. If $F$
preserves a limit of a diagram $D\colon \D\to \C$, then the limit also exists in $\Coalg(F)$.

Coalgebras model systems with a transition structure, and pointed coalgebras
extend this by a notion of initial state:
\begin{definition}
  For an object $I\in \C$, an $I$-\emph{pointed} $F$-coalgebra $(C,c,i_C)$ is an
  $F$-coalgebra $(C,c)$ together with a morphism $i_C\colon I\to C$. A pointed
  coalgebra morphism $h\colon (C,c,i_C)\to (D,d,i_D)$ is a coalgebra morphism
  $h\colon (C,c)\to (D,d)$ that preserves the point: $i_D= h \cdot i_C$.

  The category of $I$-pointed $F$-coalgebras is denoted by $\Coalg_I(F)$.
\end{definition}

\begin{example} \label{exPointedDFA}
  For $I:= 1$ in $\Set$, a pointed coalgebra $(C,c,i_C)$ for $FX=2\times X^A$ is
  a deterministic automaton, where the initial state is given by the map
  $i_C\colon 1\to C$.
\end{example}

The point can also be understood as an algebraic flavour. In general, coalgebras
are dual to $F$-algebras in the following sense.

\begin{definition}
  An $F$-algebra (for a functor $F\colon \C\to\C$) is a morphism $a\colon FA\to
  A$, an algebra homomorphism $h\colon (A,a)\to (B,b)$ is a morphism $h\colon
  A\to B$ fulfilling $b\cdot Fh = h\cdot a$ (\autoref{fig:alghom}). The category of $F$-algebras is
  denoted by $\Alg(F)$.
\end{definition}

In other words, $\Alg(F) = \Coalg(F^\dual)^\dual$ for $F^\dual\colon
\C^\dual\to\C^\dual$. The $I$-pointed coalgebras thus are also algebras for the
constant $I$ functor.
Most of the results of the present paper also apply to algebras for a functor
$F\colon \C\to\C$.

\subsection{Factorization Systems}

The process of minimizing a system constructs a quotient or subobject of the
state space, where the notions of quotient and subobject respectively stem from
a factorization system in the category of interest. This generalizes the well-known
image factorization of a map into a surjective and an injective map:

\begin{figure}[t!]
  \begin{minipage}[b]{.49\textwidth}
    \begin{subfigure}{.5\textwidth}
      \begin{tikzcd}
        C \arrow{d}{h}\arrow{r}{c}& FC \arrow{d}[overlay]{Fh}\\
        D \arrow{r}{d} & FD
      \end{tikzcd}
      \caption{$F$-Coalgebras}
      \label{fig:coalghom}
    \end{subfigure}%
    \begin{subfigure}{.5\textwidth}
      \begin{tikzcd}
        FA
        \arrow{r}{a}
        \arrow{d}{Fh}
        & A
        \arrow{d}{h}
        \\
        FB \arrow{r}{b} & B
      \end{tikzcd}
      \caption{$F$-Algebras}
      \label{fig:alghom}
    \end{subfigure}%
    \caption{A homomorphism between $\ldots$}
  \end{minipage}%
  \hfill%
  \begin{minipage}[b]{.49\textwidth}
    \begin{subfigure}{.5\textwidth}
      \begin{tikzcd}[baseline=(A.base),row sep=2mm,column sep=4mm]
        |[alias=A]|
        A
        \arrow[->>,to path={|- (\tikztotarget) \tikztonodes},rounded corners]{dr}[pos=0.2]{e}
        \arrow[]{rr}[alias=f]{f}
        && B
        \\
        & \Im (f)
        \arrow[>->,to path={-| (\tikztotarget) \tikztonodes},rounded corners]{ur}[pos=0.8]{m}
      \end{tikzcd}
      \\[2mm] %
      \caption{Factorization}
      \label{fig:im}
    \end{subfigure}%
    \begin{subfigure}{.5\textwidth}
      \begin{tikzcd}[baseline=(A.base)]
        |[alias=A]|
        A
        \arrow[->>]{r}{e}
        \arrow{d}[swap]{f}
        & B
        \arrow{d}{g}
        \arrow[dashed]{dl}[description]{\exists !d}
        \\
        C
        \arrow[>->]{r}{m}
        & D
      \end{tikzcd}
      \caption{Diagonal fill-in}
      \label{fig:diagonal}
    \end{subfigure}%
    \caption{Diagrams for \autoref{D:factSystem}}
  \end{minipage}
\end{figure}

\begin{notheorembrackets}
\begin{definition}[{\cite[Definition 14.1]{joyofcats}}] \label{D:factSystem}
  Given classes of morphisms $\E$ and $\M$ in $\C$, we say that $\C$ has an
  \emph{$(\E,\M)$-factorization system}\index{EM@$(\E,\M)$}\index{factorization system} provided that:
  \begin{enumerate}
  \item $\E$ and $\M$ are closed under composition with isomorphisms.
  \item
    Every morphism $f\colon A\to B$ in $\C$ has a factorization $f =
    m\cdot e$ with $e\in \E$ and $m\in \M$ (\autoref{fig:im}). We write $\Im(f)$
    for the intermediate object,
    $\twoheadrightarrow$ for morphisms 
    $e\in \E$, and $\monoto$ for morphisms $m\in \M$.
  \item\label{diagonalization}
    For each commutative square $g\cdot e = m\cdot f$ with $m\in \M$ and $e\in \E$,
    there exists a unique diagonal fill-in $d$ with $m\cdot d=g$ and $d\cdot e =
    f$ (\autoref{fig:diagonal}).
  \end{enumerate}
\end{definition}
\end{notheorembrackets}

\begin{example}
  In \Set, we have an $(\Epi,\Mono)$-factorization system
  where $\Epi$ is the class of surjective maps, and $\Mono$ the class of injective maps.
  The image of a map $f\colon A\to B$ is given by
  \[
    \Im(f) = \{b\in B \mid \text{there exists $a\in A$ with $f(a) = b$}\}.
  \]
  canonically yielding maps $e\colon A\twoheadrightarrow \Im(f)$ and $m\colon \Im(f)\monoto B$.
  Note that one can also regard $\Im(f)$ as a set of equivalence classes of $A$:
  \[
    \Im(f) \cong \big\{ \{a'\in A\mid f(a') = f(a)\} \,\mid a\in A \big\}.
  \]
  Intuitively, the diagonal fill-in property
  (\itemref{D:factSystem}{diagonalization}, also called \emph{diagonal lifting})
  provides a way of defining a map $d$ on equivalence classes (given by the
  surjective map at the top) and with a restricted codomain (given by the
  injective map at the bottom).
\end{example}

\begin{example}
  \label{trivFact}
  In general, the elements of $\E$ are not necessarily epimorphisms and the
  elements of $\M$ are not necessarily monomorphisms. In particular, every
  category has an $(\E,\M)$-factorization system with $\E := \Iso$ being the class
  of isomorphisms and $\M := \Mor$ being the class of all morphisms (and also
  vice-versa).
\end{example}

\begin{definition}
  An $(\E,\M)$-factorization system is called \emph{proper} if $\E\subseteq
  \Epi$ and $\M\subseteq \Mono$.
\end{definition}
These two conditions of properness are independent. In fact, $\M\subseteq \Mono$ is
equivalent to every split-epimorphism being in $\E$~\cite[Prop.~14.11]{joyofcats}.
In the literature, it is often required that the factorization system is proper,
and in fact a proper factorization system arises in complete or cocomplete
categories:

\begin{example} \label{completeCatFactorizations} Every complete category has
  a $(\StrongEpi, \Mono)$-factorization system~\cite[Thm.~14.17 and
  14C(d)]{joyofcats} and also an $(\Epi, \StrongMono)$-factorization system
  \cite[Thm.~14.19, and 14C(f)]{joyofcats}. By duality, every cocomplete
  category has so as well.
\end{example}

\begin{remark} \label{rem:EM}
  $(\E,\M)$-factorization systems have many properties known from surjective
  and injective maps on $\Set$~\cite[Chp.~14]{joyofcats}:
  \begin{enumerate}
  \item\label{rem:EM:iso} $\E\cap \M$ is the class of isomorphisms of $\C$.
  \item\label{rem:EM:3} If $f\cdot g\in \M$ and $f\in \M$, then $g\in \M$. 
    If $\M\subseteq \Mono$, then $f\cdot g\in \M$ implies $g\in \M$.
  \item $\E$ and $\M$ are respectively closed under composition.
  \item\label{rem:EM:pullback} $\M$ is stable under pullbacks, $\E$ is stable under pushouts.
  \end{enumerate}
\end{remark}

The stability generalizes as follows to wide pullbacks and pushouts:
\begin{lemma}
  \label{EM:pullbackwide}
  $\M$ is stable under wide pullbacks: for a
    family $(f_i\colon A_i\to B)_{i\in I}$ and its wide pullback $(\pr_i\colon
    P\to A_i)_{i\in I}$, a projection $\pr_j\colon P\to A_j$ is in $\M$ if $f_i$
    is in $\M$ for all $i \in I\setminus \{j\}$.
\end{lemma}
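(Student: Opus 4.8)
The plan is to reduce the statement to the orthogonality characterization of the class $\M$. Recall that in any $(\E,\M)$-factorization system, a morphism $m$ lies in $\M$ if and only if it has the \emph{unique right lifting property} against every $e\in\E$: for each commutative square $g\cdot e = m\cdot f$ with $e\in\E$ there is a unique diagonal $d$ with $m\cdot d=g$ and $d\cdot e=f$. The forward implication is precisely \itemref{D:factSystem}{diagonalization}. For the converse I would factor $m = m'\cdot e'$ with $e'\in\E$ and $m'\in\M$, apply the lifting property of $m$ against $e'$ (to the square $m\cdot\id = m'\cdot e'$) to obtain $d$ with $d\cdot e' = \id$ and $m\cdot d = m'$, and then observe that $\id$ and $e'\cdot d$ are both diagonals for the commutative square $m'\cdot e' = m'\cdot e'$, so by uniqueness $e'\cdot d = \id$ as well; hence $e'$ is an isomorphism and $m\in\M$ by closure of $\M$ under composition with isomorphisms (\autoref{D:factSystem}). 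This argument does not use properness, so it is valid for the general factorization systems considered here. Thus it suffices to verify that $\pr_j$ has the unique right lifting property against an arbitrary $e\colon X\to Y$ in $\E$.

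So fix $e\colon X\to Y$ in $\E$ together with $u\colon X\to P$ and $v\colon Y\to A_j$ satisfying $\pr_j\cdot u = v\cdot e$; I would produce a unique $d\colon Y\to P$ with $d\cdot e = u$ and $\pr_j\cdot d = v$. I construct $d$ through the universal property of the wide pullback, by specifying a compatible cone $(d_i\colon Y\to A_i)_{i\in I}$. On the distinguished index I am forced to set $d_j := v$. For $i\neq j$ I exploit that $f_i\in\M$: the square with top edge $\pr_i\cdot u\colon X\to A_i$, left edge $e$, right edge $f_i$, and bottom edge $f_j\cdot v\colon Y\to B$ commutes, since $f_i\cdot\pr_i\cdot u = f_j\cdot\pr_j\cdot u = f_j\cdot v\cdot e$ using the cone equation $f_i\cdot\pr_i = f_j\cdot\pr_j$ of $P$ and the hypothesis $\pr_j\cdot u = v\cdot e$. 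The diagonal fill-in for $f_i\in\M$ against $e\in\E$ then yields a unique $d_i\colon Y\to A_i$ with $d_i\cdot e = \pr_i\cdot u$ and $f_i\cdot d_i = f_j\cdot v$.

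These morphisms form a cone over the wide-pullback diagram, since $f_i\cdot d_i = f_j\cdot v = f_j\cdot d_j$ for every index, so they all agree after composition into $B$. Hence there is a unique $d\colon Y\to P$ with $\pr_i\cdot d = d_i$ for all $i$; in particular $\pr_j\cdot d = v$. To see $d\cdot e = u$ I post-compose with the jointly monic projections: $\pr_i\cdot d\cdot e = d_i\cdot e = \pr_i\cdot u$ for $i\neq j$, and $\pr_j\cdot d\cdot e = v\cdot e = \pr_j\cdot u$ by hypothesis, so $d\cdot e = u$. Uniqueness follows the same pattern: any competitor $d'$ satisfies $(\pr_i\cdot d')\cdot e = \pr_i\cdot u$ and $f_i\cdot(\pr_i\cdot d') = f_j\cdot\pr_j\cdot d' = f_j\cdot v$, which for $i\neq j$ are exactly the defining equations of $d_i$, forcing $\pr_i\cdot d' = d_i = \pr_i\cdot d$; together with $\pr_j\cdot d' = v = \pr_j\cdot d$ and joint monicity this gives $d' = d$.

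The main obstacle is the asymmetry of the hypothesis: since $f_j$ need not lie in $\M$, I cannot obtain the $j$-component of the cone from a diagonal fill-in and must take $d_j = v$ by hand. The nontrivial points are then, first, that the remaining fill-ins $d_i$ (for $i\neq j$) are compatible with this forced choice — which is exactly where the cone equation of $P$ and the hypothesis $\pr_j\cdot u = v\cdot e$ enter — and second, that the $j$-component of $d\cdot e = u$ holds, which again rests solely on $\pr_j\cdot u = v\cdot e$ rather than on any property of $f_j$. The preliminary orthogonality characterization is what lets the argument run uniformly over an arbitrary, possibly infinite, index set $I$, avoiding an induction that would only cover finite families.
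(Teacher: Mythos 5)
Your proof is correct, but it is organized around a different key lemma than the paper's. The paper argues directly on the factorization of the projection: it factors $\pr_j = m\cdot e$ with $e\colon P\epito C$, $m\colon C\monoto A_j$, builds a cone $(c_i\colon C\to A_i)_{i\in I}$ on the image by the same forced choice at $j$ ($c_j:=m$) and the same diagonal fill-ins at $i\neq j$ that you use, obtains a mediating morphism $s\colon C\to P$ with $s\cdot e=\id_P$, and then concludes $e\cdot s=\id_C$ from the uniqueness of diagonals, so that $e$ is an isomorphism and $\pr_j=m\cdot e\in\M$. You instead first isolate the orthogonality characterization of $\M$ (the unique-fill-in class of $\E$) — whose nontrivial direction you prove by exactly the retract argument that the paper inlines as its final step — and then verify the lifting property of $\pr_j$ against an \emph{arbitrary} $e\in\E$, constructing the lift componentwise through the wide pullback. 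In effect, the paper's proof is your lifting verification instantiated at the single square $m\cdot e = \pr_j\cdot\id_P$, so the computational core (the fill-in squares using $f_i\in\M$ for $i\neq j$, the forced $j$-component, and the universal property of the wide pullback) is shared. What your packaging buys is modularity and generality: the characterization of $\M$ as a right orthogonality class is reusable and immediately explains why $\M$ is closed under all limit-like constructions (products, intersections, wide pullbacks), not just this one. What the paper's packaging buys is brevity and self-containment: no auxiliary lemma, one factorization, one fill-in, one uniqueness argument. Both correctly avoid properness and any finiteness assumption on $I$.
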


A factorization system also provides notions of subobjects and quotients,
generalizing the notions of subset and quotient sets:

\begin{definition}
  \label{D:subobjectEM}
  For a class $\M$ of morphisms, an \emph{$\M$-subobject} of an object $X$ is 
  a pair $(S,s)$ where
  $s\colon S\monoto X$ is in $\M$.
  Two $\M$-subobjects $(s,S)$, $(s',S')$ are called \emph{isomorphic}
  if there is an isomorphism $\phi\colon S\to S'$ with $\phi\cdot s = s'$.
  We write $(s,S)\le (s', S')$ if there is a morphism $h\colon S\to S'$ with $s'\cdot h = s$.
  Dually, an \emph{$\E$-quotient} of $X$ is pair $(Q,q)$ for a
  morphism $q\colon X\epito Q$ ($q\in \E$).
  If $(\E,\M)$ is fixed from the context, we simply speak of
  subobjects and quotients.
\end{definition}
For subobjects, it is often required that $\M$ is a class of
monomorphisms~\cite[Def.~7.77]{joyofcats}, but many of the results in the
present work hold without this assumption. If $\M$ is so, then the subobjects of
a given object $X$ form a preordered class. Moreover, the subobjects form a
preordered set iff $\C$ is $\M$-wellpowered. This is in fact
the definition: $\C$ is $\M$-wellpowered if for each $X\in \C$ there is (up to
isomorphism) only a set of $\M$-subobjects. On $\Set$, the isomorphism classes
of \text{($\Mono$-)}sub\-ob\-jects of $X$ correspond to subsets of $X$ and the
isomorphism classes of ($\Epi$-)quotients of $X$ correspond to partitions on
$X$.

If $(\E,\M)$ forms a factorization system, then its axioms provide us with
methods to construct and work with subobjects and quotients, e.g.~the image
factorization means that for every morphism, we obtain a quotient on its domain
and a subobject of its codomain. The minimization of coalgebras amounts to the
construction of certain subobjects or quotients with
respect to a suitable factorization system in the category of coalgebras $\Coalg(F)$.

\section{Factorization System for Coalgebras}
\label{secCoalgFact}

If we have an $(\E,\M)$-factorization system on the base category $\C$ on which
we consider coalgebras for $F\colon \C\to\C$, then it is natural
to consider coalgebra morphisms whose underlying $\C$-morphism is in $\E$, resp.~$\M$:
\begin{definition}
  Given a class of $\C$-morphisms $\E$, we say that an $F$-coalgebra morphism
  $h\colon (C,c)\to (D,d)$ is \emph{$\E$-carried} if $h\colon C\to D$ is in $\E$.
\end{definition}

This induces the standard notions of subcoalgebra and quotient coalgebras as
instances of $\M$-subobjects and $\E$-quotients in $\Coalg(F)$: an
$\M$-subcoalgebra of $(C,c)$ is an ($\M$-carried)-subobject of $(C,c)$ (in
$\Coalg(F)$), i.e.\ is represented by an $\M$-carried homomorphism $m\colon (S,s)
\monoto (C,c)$.
Likewise, a \emph{quotient}\index{quotient coalgebra} of a coalgebra $(C,c)$ is
an ($\E$-carried)-quotient of $(C,c)$ (in $\Coalg(F)$), i.e.\
is represented by a coalgebra morphism $e\colon (C,c) \epito (Q,q)$ carried by an
epimorphism.

Note that for the case where $\M$ is the class of monomorphisms, the
monomorphisms in $\Coalg(F)$ coincide with the $\Mono$-carried homomorphisms
only under additional assumptions:
\begin{lemma}
  \label{coalgMono}
  If weak kernel pairs exist in $\C$ and are preserved by $F\colon \C\to \C$, then the
  monomorphisms in $\Coalg(F)$ are precisely the $\Mono$-carried coalgebra
  homomorphisms.
\end{lemma}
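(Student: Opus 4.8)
The plan is to prove the two inclusions separately. The easy direction is that every $\Mono$-carried homomorphism is a monomorphism in $\Coalg(F)$: if $h\colon (C,c)\to (D,d)$ is carried by a monomorphism in $\C$, then given two coalgebra morphisms $u,v\colon (X,x)\to (C,c)$ with $h\cdot u = h\cdot v$, we get $u=v$ already in $\C$ because $h$ is monic there, and hence they agree as coalgebra morphisms. This direction needs no assumptions on $F$ or on the existence of kernel pairs.

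The substance is the converse: a monomorphism in $\Coalg(F)$ is $\Mono$-carried. Here I would use the standard characterization that a morphism is monic iff its kernel pair is given by the two identities (equivalently, the diagonal is an isomorphism). First I would form, in $\C$, the weak kernel pair $(P, p_1, p_2)$ of the underlying map $h\colon C\to D$, which exists by assumption. The key step is to equip $P$ with a coalgebra structure making $(p_1,p_2)$ into coalgebra morphisms: because $F$ preserves the weak kernel pair, the pair $Fp_1, Fp_2\colon FP\to FC$ together with the map $c\cdot p_1 = \dots$ induces, by the universal (weak) property of the weak kernel pair $FP$ of $Fh$, a morphism $p\colon P\to FP$ satisfying $Fp_i\cdot p = c\cdot p_i$ for $i\in\{1,2\}$; that is exactly the condition that $p_1,p_2\colon (P,p)\to (C,c)$ are $F$-coalgebra morphisms. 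One checks that $h\cdot p_1 = h\cdot p_2$ holds in $\Coalg(F)$ by construction.

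Now I would invoke that $h$ is a monomorphism in $\Coalg(F)$: applied to the two coalgebra morphisms $p_1,p_2\colon (P,p)\to (C,c)$, which satisfy $h\cdot p_1 = h\cdot p_2$, monicity forces $p_1 = p_2$. Since $(P,p_1,p_2)$ is a weak kernel pair of $h$ in $\C$ and its two legs coincide, this yields that $h$ is monic in $\C$: concretely, any $u,v\colon Z\to C$ in $\C$ with $h\cdot u = h\cdot v$ factor through $P$, and composing with $p_1 = p_2$ gives $u = v$. Hence $h\in\Mono$, i.e.\ $h$ is $\Mono$-carried.

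The main obstacle I expect is the transfer of the coalgebra structure to $P$ and verifying it is well-defined, which is precisely where the hypothesis that $F$ preserves weak kernel pairs is used: without it, $FP$ need not be (a weak) kernel pair of $Fh$, and the mediating map $p$ giving the coalgebra structure would not be available. A secondary care point is working with \emph{weak} rather than genuine kernel pairs, so the mediating morphisms are not asserted to be unique; I would make sure every step only relies on existence of a factorization and never on its uniqueness, so that the argument goes through under the stated weak hypothesis.
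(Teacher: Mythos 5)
Your proposal is correct and follows essentially the same route as the paper's proof: the easy direction is immediate, and for the converse both arguments form the weak kernel pair of the underlying map, use preservation by $F$ to induce a coalgebra structure on it making the projections coalgebra morphisms, conclude the projections coincide by monicity in $\Coalg(F)$, and then derive monicity in $\C$ from the weak universal property. Your explicit care points (that $F$-preservation is exactly what supplies the mediating coalgebra structure, and that only existence, never uniqueness, of mediating morphisms may be used) match the structure of the paper's argument precisely.
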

This is a commonly known criterion, and Gumm and Schröder~\cite[Example 3.5]{GS05} present a functor not preserving kernel pairs and a monic coalgebra homomorphism that is not carried by a monomorphism.

For the construction of quotient coalgebras and subcoalgebras, it is handy to
have the factorization system directly in $\Coalg(F)$. It is a standard
result that the image factorization of homomorphisms lifts (see e.g.~\cite[Lemma
2.5]{lffiac}). Under assumptions on $\E$ and $\M$, Kurz shows that the
factorization system lifts to \Coalg(F)~\cite[Theorem 1.3.7]{kurzPhd} (and to
other categories with a forgetful functor to the base category $\C$).

In fact, the factorization system always lifts to $\Coalg(F)$ under the
condition that $F$ preserves $\M$. By this condition, we mean that $m\in \M$
implies $Fm\in \M$.
\begin{lemma}
  \label{coalgfactor}
  If $F\colon \C\to\C$ preserves $\M$, then the $(\E,\M)$-factorization system
  lifts from $\C$ to an ($\E$-carried, $\M$-carried)-factorization system in
  $\Coalg(F)$. The factorization of an $F$-coalgebra homomorphism $h\colon (C,c)\to (D,d)$
  is given by that of the underlying morphism $h\colon C\to D$.
\end{lemma}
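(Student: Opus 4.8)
The plan is to verify the three axioms of \autoref{D:factSystem} for the classes of $\E$-carried and $\M$-carried coalgebra homomorphisms, reducing each to the corresponding axiom in $\C$. Axiom~1 is immediate: the forgetful functor $\Coalg(F)\to\C$ sends coalgebra isomorphisms to $\C$-isomorphisms, so composing an $\E$-carried (resp.\ $\M$-carried) homomorphism with a coalgebra isomorphism amounts on underlying morphisms to composing an element of $\E$ (resp.\ $\M$) with a $\C$-isomorphism, which stays in $\E$ (resp.\ $\M$) by axiom~1 in $\C$. The real work lies in axioms~2 and~3, and both invoke the diagonal fill-in property in $\C$ together with the hypothesis that $F$ preserves $\M$.

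For the factorization (axiom~2), given a homomorphism $h\colon (C,c)\to(D,d)$ I would factor the underlying morphism in $\C$ as $h = m\cdot e$ with $e\colon C\epito\Im(h)$ in $\E$ and $m\colon \Im(h)\monoto D$ in $\M$, and then equip $\Im(h)$ with a coalgebra structure $i\colon \Im(h)\to F\Im(h)$ turning $e$ and $m$ into homomorphisms. The structure $i$ is obtained as the diagonal fill-in (\autoref{fig:diagonal}) of the square with top $e\in\E$, bottom $Fm\in\M$ (here the hypothesis that $F$ preserves $\M$ enters), left leg $Fe\cdot c$, and right leg $d\cdot m$. This square commutes because $m\cdot e = h$ yields $(d\cdot m)\cdot e = d\cdot h = Fh\cdot c = Fm\cdot Fe\cdot c$, using that $h$ is a homomorphism. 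The two fill-in equations $i\cdot e = Fe\cdot c$ and $Fm\cdot i = d\cdot m$ then say precisely that $e$ is $\E$-carried and $m$ is $\M$-carried, as desired.

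For the diagonalization (axiom~3), consider a commuting square of homomorphisms with $e$ $\E$-carried on top and $m$ $\M$-carried on the bottom, writing $(A,\alpha)$, $(B,\beta)$, $(C,\gamma)$, $(D,\delta)$ for the coalgebras and $f,g$ for the remaining legs. On underlying morphisms there is a unique diagonal $d\colon B\to C$ in $\C$ (using $e\in\E$, $m\in\M$), and faithfulness of the forgetful functor makes this the only candidate in $\Coalg(F)$; it remains to show $d$ is a homomorphism, i.e.\ $\gamma\cdot d = Fd\cdot\beta$. I would prove this by a second appeal to uniqueness of fill-ins: both $\gamma\cdot d$ and $Fd\cdot\beta$ solve the fill-in problem for the square with top $e\in\E$ and bottom $Fm\in\M$, with common left leg $Ff\cdot\alpha = \gamma\cdot f$ and common right leg $Fg\cdot\beta = \delta\cdot g$. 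Verifying the four identities only uses that $e,f,g,m$ are homomorphisms together with $d\cdot e = f$ and $m\cdot d = g$, and uniqueness of the diagonal then forces $\gamma\cdot d = Fd\cdot\beta$.

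The only genuinely delicate point, and the place where the hypothesis is indispensable, is that both constructions require $Fm\in\M$ for the relevant $m\in\M$: without $F$ preserving $\M$ one cannot apply the diagonal fill-in in $\C$ either to place a coalgebra structure on the image or to certify that the $\C$-diagonal respects the coalgebra structure. Everything else is routine bookkeeping with the fill-in equations and the homomorphism condition, so I expect no further obstacles.
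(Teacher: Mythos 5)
Your proposal is correct and follows essentially the same route as the paper's proof: the image coalgebra structure is obtained as the diagonal fill-in of the square with top $e\in\E$ and bottom $Fm\in\M$, and the diagonal in $\Coalg(F)$ is certified to be a homomorphism by exhibiting both $\gamma\cdot d$ and $Fd\cdot\beta$ as fill-ins of one and the same square and invoking uniqueness. Your explicit remark that faithfulness of the forgetful functor settles uniqueness of the diagonal is a minor (and welcome) addition the paper leaves implicit.
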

\begin{proof}
  We verify \autoref{D:factSystem}:
  \begin{enumerate}
  \item The $\E$- and $\M$-carried morphisms are closed under composition
    with isomorphisms, respectively.

  \item
    Given an $F$-coalgebra morphism $f\colon (A,a)\to (B,b)$,
  consider its factorization $f=m\cdot e$ in $\C$.
  Since $F$ preserves $\M$, we have $Fm\in \M$ and thus can apply the
  diagonal fill-in property (\itemref{D:factSystem}{diagonalization})
  to the coalgebra morphism square of $f$ (\autoref{coalgfactorStep2}). This defines a unique
  coalgebra structure $d$ on $\Im(f)$ making $e$ and $m$ coalgebra morphisms.
    \begin{figure}[t]
      \begin{subfigure}[b]{.36\textwidth}
        \begin{tikzcd}[baseline={([yshift=-2mm]f.base)}]
          |[alias=A]|
          A
          \arrow[shiftarr={yshift=6mm}]{rr}[alias=f]{f}
          \arrow[->>]{r}{e}
          \arrow{d}{a}
          & \Im(f)
          \arrow[>->]{r}{m}
          \arrow[dashed]{d}{\exists! d}
          & B
          \arrow{d}{b}
          \\
          FA
          \arrow[->]{r}{Fe}
          \arrow[shiftarr={yshift=-6mm}]{rr}[swap]{Ff}
          & F\Im(f)
          \arrow[>->]{r}{Fm}
          & FB
        \end{tikzcd}
        \caption{Factorization}
        \label{coalgfactorStep2}
      \end{subfigure}
      \hfill
      \begin{subfigure}[b]{.26\textwidth}
        \begin{tikzcd}[baseline=(A.base)]
          |[alias=A]|
          (A,a)
          \arrow[->>]{r}{e}
          \arrow{d}[swap]{f}
          & (B,b)
          \arrow{d}{g}
          \\
          (C,c)
          \arrow[>->]{r}{m}
          \arrow[draw=none,shiftarr={yshift=-6mm}]{r}[swap]{\phantom{Ff}}
          & (D,d)
        \end{tikzcd}
        \caption{Premise in $\Coalg(F)$}
        \label{coalgfactorAssump}
      \end{subfigure}
      \hfill
      \begin{subfigure}[b]{.26\textwidth}
        \begin{tikzcd}[sep=9mm,baseline=(A.base)]
          |[alias=A]|
          A
          \arrow[->>]{r}{e}
          \arrow{d}[swap]{f}
          & B
          \arrow{d}{g}
          \arrow[dashed]{dl}[description]{\exists !h}
          \\
          C
          \arrow[>->]{r}{m}
          \arrow[draw=none,shiftarr={yshift=-5mm}]{r}[swap]{\phantom{Ff}}
          & D
        \end{tikzcd}
        \caption{Fill-in in $\C$}
        \label{coalgfactorBase}
      \end{subfigure}
      \caption{Diagrams for the proof of \autoref{coalgfactor}}
    \end{figure}

  \item 
  In order to check the diagonal-lifting property of the
  ($\E$-carried, $\M$-carried)-factorization system, consider a commutative
  square $g\cdot e = m\cdot f$ in $\Coalg(F)$ with $m\in \M$, $e\in \E$ as
  depicted in \autoref{coalgfactorAssump}.
  In $\C$, there exists a unique $h\colon B\to C$ with
  $h\cdot e = f$ and $m\cdot h =g$ (\autoref{coalgfactorBase}).
  We only need to prove that $h\colon B\to C$ is a coalgebra homomorphism
  $(B,b)\to (C,c)$, i.e.~that $c\cdot h = Fh\cdot b$. We prove this equality by
  showing that both $c\cdot h$ and $Fh\cdot b$ are diagonals in a commutative
  square of the form of \itemref{D:factSystem}{diagonalization}.
  Indeed, we have the commutative squares:
  \[
    \begin{tikzcd}[sep=12mm]
      A
      \arrow{d}[swap]{a}
      \arrow{dr}{f}
      \arrow[->>]{rr}{e}
      &
      & B
      \arrow{dl}[swap]{h}
      \arrow{d}{g}
      \arrow{dr}{b}
      \\
      FA
      \arrow{d}[swap]{Ff}
      \descto{r}{f\text{ hom.}}
      & C
      \arrow[>->]{r}{m}
      \arrow{dl}{c}
      \descto{dr}{m\text{ hom.}}
      & D
      \descto{r}{g\text{ hom.}}
      \arrow{d}{d}
      & FB
      \arrow{dl}{Fg}
      \\
      FC
      \arrow[>->]{rr}{Fm}
      & & FD
    \end{tikzcd}
    \quad\text{and}\quad
    \begin{tikzcd}[sep=12mm]
      A
      \arrow[->>]{rr}{e}
      \arrow{d}[swap]{a}
      \descto{dr}{e\text{ hom.}}
      && B
      \arrow{dl}{b}
      \arrow{d}{b}
      \\
      FA
      \arrow{d}[swap]{Ff}
      \arrow{r}{Fe}
      & FB
      \arrow{dl}{Fh}
      \arrow{dr}{Fg}
      \descto{r}{\text{trivial}}
      & FB
      \arrow{d}{Fg}
      \\
      FC
      \arrow[>->]{rr}{Fm}
      & & Fm
    \end{tikzcd}
  \]
  By the uniqueness of the diagonal in \itemref{D:factSystem}{diagonalization}, $c\cdot h =
  Fh\cdot b$.
  \qedhere
  \end{enumerate}
\end{proof}

\begin{remark}
  The condition that $F$ preserves $\M$ is commonly met. For $\Set$ and $\M$
  being the class of injective maps, it can be assumed wlog for coalgebraic
  purposes that $F$ preserves injective maps: every set functor preserves
  injective maps with non-empty domain and only needs to be modified on
  $\emptyset$ in order to preserve all injective maps~\cite{trnkova71}.
  The resulting functor has an isomorphic category of coalgebras.
\end{remark}

We have the dual result for $F$-algebras:
\begin{lemma}
  \label{algfactor}
  If $F\colon \C\to\C$ preserves $\E$, then the $(\E,\M)$-factorization system
  lifts from $\C$ to $\Alg(F)$.
\end{lemma}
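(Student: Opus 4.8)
The plan is to deduce this dually from \autoref{coalgfactor}, exploiting the identity $\Alg(F) = \Coalg(F^\dual)^\dual$ recorded right after the definition of $F$-algebras. The key bookkeeping step is to track how the factorization system transports to the opposite category: an $(\E,\M)$-factorization system on $\C$ becomes an $(\M,\E)$-factorization system on $\C^\dual$, where now the class $\M$ plays the role of the left (epi-like) factors and $\E$ plays the role of the right (mono-like) factors. Indeed, a factorization $f = m\cdot e$ in $\C$ with $e\in\E$ and $m\in\M$ reads in $\C^\dual$ as a factorization with first factor $m\in\M$ and second factor $e\in\E$, while the diagonal fill-in property (\itemref{D:factSystem}{diagonalization}) is visibly self-dual.

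With this translation in hand, I would observe that the hypothesis \textquotedblleft$F$ preserves $\E$\textquotedblright{} says exactly that $F^\dual\colon\C^\dual\to\C^\dual$ preserves $\E$, and that $\E$ is precisely the right-hand (mono-role) class of the factorization system on $\C^\dual$. Hence \autoref{coalgfactor}, applied to the functor $F^\dual$ on $\C^\dual$ equipped with its $(\M,\E)$-factorization system, has its preservation hypothesis met and yields an $(\M\text{-carried}, \E\text{-carried})$-factorization system on $\Coalg(F^\dual)$. Dualizing back via $\Alg(F) = \Coalg(F^\dual)^\dual$ transports this to an $(\E\text{-carried}, \M\text{-carried})$-factorization system on $\Alg(F)$, with the roles of the two carried classes swapped back to their original orientation, and the factorization of an algebra homomorphism is again that of its underlying $\C$-morphism.

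The only real obstacle is the direction-tracking in the first two paragraphs: one must be careful that \emph{preserving $\E$} is the correct dual hypothesis, i.e.\ that it is $\E$ (and not $\M$) which plays the $\M$-role in $\C^\dual$, so that the $\M$-preservation hypothesis of \autoref{coalgfactor} is the one being verified for $F^\dual$. Once the opposite-category translation is pinned down, no new calculation is needed.

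Alternatively, one can give a direct proof mirroring that of \autoref{coalgfactor}. Factor a homomorphism $h\colon(A,a)\to(B,b)$ as $h = m\cdot e$ in $\C$; since $F$ preserves $\E$ we have $Fe\in\E$, so the square with top edge $Fe$, bottom edge $m$, left edge $e\cdot a$ and right edge $b\cdot Fm$, which commutes because $(b\cdot Fm)\cdot Fe = b\cdot Fh = h\cdot a = m\cdot(e\cdot a)$ by the homomorphism law for $h$, admits a unique diagonal $q\colon F\Im(h)\to\Im(h)$ satisfying $m\cdot q = b\cdot Fm$ and $q\cdot Fe = e\cdot a$. This makes $e$ and $m$ algebra homomorphisms, and the diagonal-lifting property then follows verbatim from the argument used for coalgebras, with $\E$ and $\M$ exchanging roles.
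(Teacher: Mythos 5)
Your proposal is correct and takes essentially the same route as the paper: the paper's proof is exactly your first two paragraphs, i.e.\ it views the $(\E,\M)$-factorization system on $\C$ as an $(\M,\E)$-factorization system on $\C^\dual$, applies \autoref{coalgfactor} to $F^\dual$ (whose $\M$-role preservation hypothesis is precisely \textqt{$F$ preserves $\E$}), and dualizes back along $\Alg(F) = \Coalg(F^\dual)^\dual$. Your supplementary direct argument in the final paragraph is also sound, but it is not needed and does not appear in the paper.
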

\begin{proof}
  We have an $(\M,\E)$-factorization system in $\C^\dual$. By
  \autoref{coalgfactor}, this factorization system lifts to $\Coalg(F^\dual)$
  since $F^\dual\colon \C^\dual\to \C^\dual$ preserves $\E$. Thus, we have an
  ($\E$-carried, $\M$-carried)-factorization system in $\Alg(F) = \Coalg(F^\dual)^\dual$.
\end{proof}

This lifting result even holds for Eilenberg-Moore algebras for a monad $T\colon
\C\to\C$. The Eilenberg-Moore category of a monad $T$ is a full subcategory of
$\Alg(T)$ containing those algebras that interact coherently with the structure
of the monad $T$, see~e.g.~Awodey~\cite{awodey2010category} for details:
\begin{lemma}
  \label{monadfactor}
  If a monad $T\colon \C\to\C$ preserves $\E$, then the $(\E,\M)$-factorization system
  lifts from $\C$ to the Eilenberg-Moore category of $T$.
\end{lemma}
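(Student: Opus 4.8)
The plan is to bootstrap from the lifting to $\Alg(T)$ that we already have and then show the factorization never leaves the Eilenberg--Moore category $\C^T$ of $T$. Since $T$ preserves $\E$, \autoref{algfactor} equips $\Alg(T)$ with an ($\E$-carried, $\M$-carried)-factorization system, and $\C^T$ is a \emph{full} subcategory of $\Alg(T)$. Because of fullness, two things suffice: first, that factoring a homomorphism between Eilenberg--Moore algebras in $\Alg(T)$ produces an intermediate algebra that again satisfies the unit and multiplication laws (so the two factors are automatically morphisms of $\C^T$); and second, that the unique $\Alg(T)$-diagonal of a square living in $\C^T$ is itself a $\C^T$-morphism. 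The latter, together with closure under composition with isomorphisms, is inherited from $\Alg(T)$ by fullness, so the only genuine work lies in the intermediate object. Throughout I write $\eta$ and $\mu$ for the unit and multiplication of $T$.

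So fix a homomorphism $f\colon (A,a)\to (B,b)$ of Eilenberg--Moore algebras and factor it in $\Alg(T)$ as $f=m\cdot e$ with $e\colon A\to P$ in $\E$, $m\colon P\to B$ in $\M$, and $T$-algebra structure $i\colon TP\to P$, so that $i\cdot Te=e\cdot a$ and $m\cdot i=b\cdot Tm$. I would verify the \emph{unit law} $i\cdot\eta_P=\id_P$ by exhibiting both sides as diagonals of the evidently commuting square whose top and left edges are $e$ and whose right and bottom edges are $m$. The identity $\id_P$ is trivially such a diagonal. For $i\cdot\eta_P$ one computes $(i\cdot\eta_P)\cdot e = i\cdot Te\cdot\eta_A = e\cdot a\cdot\eta_A = e$, using naturality of $\eta$, the homomorphism law for $e$, and the unit law of $(A,a)$; dually $m\cdot(i\cdot\eta_P) = b\cdot Tm\cdot\eta_P = b\cdot\eta_B\cdot m = m$. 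By the uniqueness in \itemref{D:factSystem}{diagonalization}, the two diagonals coincide.

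For the \emph{multiplication law} $i\cdot\mu_P=i\cdot Ti$ I would exploit that $T$, and hence the composite $T\circ T$, preserves $\E$, so that $TTe\in\E$. Then both $i\cdot\mu_P$ and $i\cdot Ti$ are diagonals of the square with top edge $TTe\in\E$ and bottom edge $m\in\M$: precomposing with $TTe$, both sides reduce to $e\cdot a\cdot Ta$ via naturality of $\mu$ and the homomorphism law for $e$ (the associativity of $(A,a)$ entering on the $i\cdot\mu_P$ side), and postcomposing with $m$, both reduce to $b\cdot Tb\cdot TTm$ via naturality of $\mu$ and the homomorphism law for $m$ (the associativity of $(B,b)$ entering on the $i\cdot\mu_P$ side). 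Uniqueness of the diagonal again forces $i\cdot\mu_P=i\cdot Ti$, so $(P,i)$ is an Eilenberg--Moore algebra and $e,m$ are morphisms of $\C^T$.

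It then remains only to note that closure under composition with isomorphisms is inherited, and that the diagonal fill-in for a square in $\C^T$ is the one obtained in $\Alg(T)$, which lies in $\C^T$ by fullness and is unique there. I expect the multiplication law to be the main obstacle: it is the step that requires setting up the correct diagonalization square, and it is precisely where the hypothesis that $T$ (rather than merely its underlying functor) preserves $\E$ is used, since it is needed to guarantee $TTe\in\E$.
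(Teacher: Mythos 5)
Your proof is correct and takes essentially the same route as the paper's: factor the homomorphism in $\Alg(T)$ via \autoref{algfactor}, verify the unit law by exhibiting both $i\cdot\eta$ and the identity as diagonals of the square $m\cdot e=m\cdot e$, verify the multiplication law by exhibiting $i\cdot Ti$ and $i\cdot\mu$ as diagonals of a square with top edge $TTe\in\E$ (which is exactly where preservation of $\E$ by $T$ is needed), and inherit the remaining factorization-system axioms from the fullness of the Eilenberg--Moore category in $\Alg(T)$. The only differences are cosmetic (equational reasoning in place of the paper's pasted diagrams).
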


The factorization system lifts further also to pointed coalgebras:

\begin{lemma}
  If $F\colon \C\to\C$ preserves $\M$, then the $(\E,\M)$-factorization system
  lifts from $\C$ to $\Coalg_I(F)$.
\end{lemma}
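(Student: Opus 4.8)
The plan is to reduce everything to \autoref{coalgfactor} by transporting the factorization of the underlying (unpointed) coalgebra homomorphism and equipping the intermediate object with the only point that can possibly work. First I would declare the distinguished classes in $\Coalg_I(F)$ to be the $\E$-carried and $\M$-carried pointed homomorphisms, and dispatch the trivial closure under composition with isomorphisms.

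For the factorization itself, take a pointed homomorphism $f\colon (A,a,i_A)\to (B,b,i_B)$ and factor its underlying coalgebra homomorphism by \autoref{coalgfactor} as $f = m\cdot e$ with $e\colon (A,a)\epito (\Im(f),\bar d)$ in $\E$ and $m\colon (\Im(f),\bar d)\monoto (B,b)$ in $\M$. The point on the intermediate coalgebra is forced: since $e$ must become a pointed morphism, I set $i_{\Im(f)} := e\cdot i_A$. Then $e$ preserves the point by construction, and $m$ preserves it because $m\cdot i_{\Im(f)} = m\cdot e\cdot i_A = f\cdot i_A = i_B$, using that $f$ is pointed. Thus $f = m\cdot e$ is a factorization in $\Coalg_I(F)$ into an $\E$-carried followed by an $\M$-carried pointed homomorphism.

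For the diagonal fill-in, consider a commuting square $g\cdot e = m\cdot f$ of pointed homomorphisms with $e$ being $\E$-carried and $m$ being $\M$-carried (the left leg $f\colon (A,a,i_A)\to (C,c,i_C)$, the right leg $g$). Forgetting the points, \autoref{coalgfactor} already supplies a unique coalgebra homomorphism $h$ with $h\cdot e = f$ and $m\cdot h = g$; this $h$ is automatically the unique candidate in $\Coalg_I(F)$ as well, since any pointed diagonal is in particular an unpointed one and hence must coincide with $h$. The only thing left is that $h$ preserves the point, and here the computation is immediate: $h\cdot i_B = h\cdot e\cdot i_A = f\cdot i_A = i_C$, where I use that $e$ and $f$ are pointed.

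The step I would expect to cause the most hesitation is precisely this last one. The naive attempt is to cancel $m$ in $m\cdot(h\cdot i_B) = g\cdot i_B = i_D = m\cdot i_C$, but that needs $m$ to be monic and so would only cover \emph{proper} factorization systems. The clean route is to route the point through $e$ instead of through $m$, i.e.\ to use $i_B = e\cdot i_A$ rather than $i_D = m\cdot i_C$; once this is spotted the whole proof is routine and works for a general, not necessarily proper, factorization system. (Alternatively one could phrase the argument as combining \autoref{coalgfactor} with the observation that the point is an algebra structure for the constant functor $K_I$, which preserves $\E$ trivially; but the direct verification above is shorter.)
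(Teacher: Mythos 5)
Your proof is correct and is essentially the paper's argument made explicit: the paper simply says the result is a combination of \autoref{coalgfactor} and \autoref{algfactor}, using that the constant functor preserves $\E$, and your construction $i_{\Im(f)} := e\cdot i_A$ together with the fill-in check $h\cdot i_B = h\cdot e\cdot i_A = f\cdot i_A = i_C$ is exactly what that combination unfolds to for the constant functor (your observation that routing the point through $e$ rather than cancelling $m$ keeps the argument valid for non-proper systems is precisely why that combination works). Your closing parenthetical even names the paper's phrasing as the alternative, so the two proofs coincide in substance.
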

\begin{proof}
  A combination of \autoref{coalgfactor} and \ref{algfactor}, using that the
  constant functor preserves $\E$-morphisms.
  \takeout{}
\end{proof}

\section{Minimality in a Category}
\label{secMinObj}
Having seen multiple categories with an $(\E,\M)$-factorization system, we can now
define the minimality of objects abstractly.
\begin{definition}
  Given a category $\K$ with an $(\E,\M)$-factorization system, an object $C$ of
  $\K$ is called \emph{$\M$-minimal}\index{minimal@$\M$-minimal} if every
  morphism $h\colon D\monoto C$ in $\M$ is an isomorphism.
\end{definition}
\begin{remark}
  Every $(\E,\M$)-factorization system on $\K$ is an $(\M,\E)$-factorization
  system on $\K^\dual$, and thus induces a dual notion of
  $\E$-minimality: an object $C$ of $\K$ is called \emph{$\E$-minimal} if every $h\colon
  C\epito D$ in $\E$ is an isomorphism.
\end{remark}

In the following, $\K$ will denote
the category in which we consider the minimal objects, e.g.~a category of
coalgebras for a functor $F\colon \C\to \C$.
\begin{assumption}
  \label{assDualFactor}
  In the following, assume that the category $\K$ has an $(\E,\M)$-fac\-tor\-iza\-tion
  system. Whenever we consider a category of coalgebras for a functor $F\colon
  \C\to \C$ in the following, we achieve this by assuming that $\C$ has an
  $(\E,\M)$-factorization system and that $F$ preserves~$\M$.
\end{assumption}
The leading examples of the minimality notion in the present work are the following
two instances in coalgebras:
\begin{instance}
  For $\K := \Coalg_I(F)$, the ($\M$-carried-)minimal objects are the \emph{reachable}
  coalgebras, as introduced by~\Adamek\ \etal~\cite{amms13}.
  Concretely, an $I$-pointed $F$-coalgebra $(C,c,i_C)$ is reachable if it has no (proper)
  pointed subcoalgebra, equivalently, if every $\M$-carried coalgebra morphism
  $m\colon (S,s,i_S)\monoto (C,c,i_C)$ is necessarily an isomorphism~\cite{amms13}.

  In \Set, this corresponds to the usual notion of reachability: if a state
  $x\in C$ is contained in a subcoalgebra $m\colon (S,s,i_S)\monoto (C,c,i_C)$,
  then all successors of $x$ need to be contained in the subcoalgebra as well,
  since $m$ is a coalgebra homomorphism. Moreover, the subcoalgebra has to
  contain the point $i_C\colon I\to C$, and thus also all
  its successors, and in total all states reachable from $i_C$ in finitely many
  steps. Hence, $(C,c,i_C)$ is reachable if it is not possible to omit any
  state in a pointed subcoalgebra $(S,s,i_S)$, i.e.~if any such injective $m$ is
  a bijection.
\end{instance}

\begin{instance}
  For $\K := \Coalg(F)^\dual$, the ($\E$-carried-)minimal objects are called
  \emph{simple} coalgebras, as mentioned~by Gumm~\cite{Gumm03}. Usually, a
  simple coalgebra is defined as a coalgebra that does not have any proper
  quotient~\cite{WissmannDMS19}.\footnote{Gumm~\cite[p.~34]{Gumm03} defines a simple coalgebra as the
    quotient of a coalgebra on $\Set$ modulo behavioural equivalence.}

  In \Set, a coalgebra is simple iff all states have different behaviour --
  this characterization follows
  directly the following equivalent characterization of minimal objects as we
  will see in \autoref{instSimpleBehEq}:
\end{instance}

\begin{lemma}
  \label{minimalInE}
  An object $C$ in $\K$ is $\M$-minimal iff every $h\colon D\to C$ is in $\E$.
\end{lemma}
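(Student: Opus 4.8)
The plan is to prove the two implications separately, using only the factorization-system axioms from \autoref{D:factSystem} and the fact that $\E\cap\M$ is exactly the class of isomorphisms (\itemref{rem:EM}{rem:EM:iso}). Neither direction requires the diagonal fill-in; both reduce to the factorization and the closure of $\E$ under composition with isomorphisms.

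For the \emph{if} direction, I would assume that every $h\colon D\to C$ lies in $\E$ and check $\M$-minimality directly. So let $m\colon D\monoto C$ be an arbitrary morphism in $\M$. By hypothesis $m$ is also in $\E$, hence $m\in \E\cap \M$, which by \itemref{rem:EM}{rem:EM:iso} means $m$ is an isomorphism. Since $m$ was an arbitrary $\M$-morphism into $C$, the object $C$ is $\M$-minimal.

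For the \emph{only if} direction, I would assume $C$ is $\M$-minimal and take an arbitrary morphism $h\colon D\to C$. Using the factorization system (\itemref{D:factSystem}{}, the existence of factorizations), write $h = m\cdot e$ with $e\colon D\epito \Im(h)$ in $\E$ and $m\colon \Im(h)\monoto C$ in $\M$. Because $m\in\M$ and $C$ is $\M$-minimal, $m$ must be an isomorphism. Then $h = m\cdot e$ is the composite of an $\E$-morphism with an isomorphism, and since $\E$ is closed under composition with isomorphisms (\itemref{D:factSystem}{}, the first axiom), we conclude $h\in\E$.

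I do not expect a genuine obstacle here: the statement is essentially a repackaging of the factorization axioms together with the characterization $\E\cap\M=\Iso$. The only point that deserves care is making sure the right axioms are cited in each direction -- the \emph{if} direction is where $\E\cap\M=\Iso$ does the work, while the \emph{only if} direction is where the existence of factorizations and the closure of $\E$ under isomorphisms are needed -- but no calculation beyond composing the factorization with the isomorphism $m$ is involved.
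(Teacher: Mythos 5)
Your proof is correct and follows exactly the paper's argument: the \emph{if} direction identifies an arbitrary $\M$-morphism into $C$ as lying in $\E\cap\M=\Iso$, and the \emph{only if} direction factorizes $h=m\cdot e$, uses minimality to make $m$ an isomorphism, and concludes $h\in\E$. You merely make explicit the citations (to \itemref{rem:EM}{rem:EM:iso} and to closure of $\E$ under composition with isomorphisms) that the paper leaves implicit.
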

\begin{proof}
  In the \singlequote{if} direction, consider some $\M$-morphism $h\colon D\to
  C$. By the assumption, $h$ is also in $\E$ and thus an isomorphism.
  In the \singlequote{only if} direction, take some morphism $h\colon D\to C$
  and consider its $(\E,\M)$-factorization $e\colon D \epito \Im(h)$ and
  $m\colon \Im(h)\monoto C$ with $h=m\cdot e$. Since $C$ is $\M$-minimal, $m$ is
  an isomorphism and thus $h=m\cdot e$ is in $\E$.
\end{proof}
\begin{instance}
  \label{instSimpleBehEq}
  For $\K:=\Coalg(F)^\dual$, an $F$-coalgebra $(C,c)$ is simple iff every
  $F$-coalgebra homomorphism $h\colon (C,c)\to (D,d)$ is $\M$-carried.
\end{instance}

  In \Set, this equivalence shows that the simple coalgebras are precisely those
  coalgebras for which behavioural equivalence coincides with behavioural
  equivalence:
  \begin{itemize}
  \item If states $x,y\in C$ are behaviourally equivalent, then there is some
    $h\colon (C,c)\to (D,d)$ with $h(x)=h(y)$. By \autoref{instSimpleBehEq},
    $h$ must be injective and thus $x=y$.

  \item Conversely, if all states in $(C,c)$ have different behaviour, then
    every $h\colon (C,c)\to (D,d)$ is necessarily injective by
    \autoref{defBehEq}. Thus, by \autoref{instSimpleBehEq}, $(C,c)$ is simple.
  \end{itemize}

Gumm already noted that in $\Set$, every outgoing coalgebra morphism from a
simple coalgebra is injective~\cite[Hilfssatz~3.6.3]{Gumm03} -- but it was not
used to characterize simplicity. If $\E$, resp.~$\M$, happens to be the class of
epimorphisms, resp.~monomorphisms, another characterization of minimality exists:

\begin{lemma}
  \label{minimalSubterminal}
  Assume $\E=\Epi$ and weak equalizers in $\K$, then $X$ is $\M$-minimal iff
  there is at most one morphism $u\colon C\to D$ for every $D\in \K$.

  Dually, given $\M=\Mono$ and weak coequalizers, $C$ is $\E$-minimal iff $C$ is
  \emph{subterminal}, that is, iff there is a most one $u\colon D\to C$ for
  every $D\in \K$.
\end{lemma}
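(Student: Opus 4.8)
I will prove the two statements of \autoref{minimalSubterminal} in turn; by duality it suffices to give the argument for the first one (the $\M=\Mono$, $\E=\Epi$ situation) carefully, since the second follows by applying the first to $\K^\dual$ with the roles of $\E$ and $\M$ exchanged. So let me concentrate on: assuming $\E=\Epi$ and that $\K$ has weak equalizers, show that $C$ is $\M$-minimal iff for every $D\in\K$ there is at most one morphism $C\to D$.

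\textbf{The \singlequote{only if} direction.}
Suppose $C$ is $\M$-minimal, and let $u,v\colon C\to D$ be two morphisms. First I would form a weak equalizer $w\colon E\to C$ of $u$ and $v$, so that $u\cdot w=v\cdot w$ and $w$ has the weak universal property. Now take the $(\E,\M)$-factorization $w=m\cdot e$ with $e\colon E\epito\Im(w)$ and $m\colon\Im(w)\monoto C$. Since $C$ is $\M$-minimal, $m$ is an isomorphism, so $w=m\cdot e$ is in $\E=\Epi$, i.e.\ $w$ is an epimorphism. But $u\cdot w=v\cdot w$ together with $w$ epic forces $u=v$. Hence there is at most one morphism $C\to D$.

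\textbf{The \singlequote{if} direction.}
Conversely, assume that for every $D$ there is at most one morphism $C\to D$, and let $h\colon D\monoto C$ be in $\M$. By \autoref{minimalInE} it suffices to show that $C$ is $\M$-minimal, i.e.\ that $h$ is an isomorphism; I will produce an inverse. The idea is that the \singlequote{at most one} hypothesis makes $C$ behave like a subterminal object (in $\K^\dual$), so any two parallel maps \emph{into} $C$ must agree, and this rigidity lets me split $h$. Concretely, I would use the weak equalizer of the pair $(\id_C, \text{candidate})$ or, more directly, observe that the uniqueness hypothesis says precisely that every object admits at most one map into $C$ when read in $\K^\dual$; combined with $\M\subseteq\Mono$-type reasoning and the factorization of $h$, one shows $h\in\E$, whence $h\in\E\cap\M=\Iso$ by \itemref{rem:EM}{rem:EM:iso}. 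The cleanest route is: factor some suitable morphism through $h$ using that $h\in\M$ admits diagonal fill-ins, and use the at-most-one property to collapse the two composites $h\cdot(\text{section})$ and $\id$.

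\textbf{Where the difficulty lies.}
I expect the genuinely delicate step to be the \singlequote{if} direction, specifically extracting an actual inverse (an isomorphism) from the purely \emph{hom-set cardinality} hypothesis \singlequote{at most one $C\to D$}. The \singlequote{only if} direction is a routine weak-equalizer-plus-factorization argument and should cause no trouble. For the converse, the subtlety is that the hypothesis constrains morphisms out of $C$, whereas $\M$-minimality concerns morphisms into $C$; bridging these requires carefully choosing the test object $D$ and exploiting that $\E=\Epi$ makes the factorization of the relevant map collapse. I anticipate needing weak equalizers precisely here to build the morphism witnessing that $h$ is split, and then the at-most-one property to identify the two resulting endomorphisms of $C$ with $\id_C$, yielding that $h$ is an isomorphism.
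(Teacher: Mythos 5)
Your \singlequote{only if} direction is correct and is essentially the paper's argument: take a weak equalizer $w\colon E\to C$ of $u,v$, observe that $\M$-minimality forces $w$ to be an epimorphism (you inline the factorization argument, the paper just cites \autoref{minimalInE}), and cancel. However, the \singlequote{if} direction contains a genuine gap, and your assessment of where the difficulty lies is inverted. You never actually prove that a given $h\colon D\monoto C$ in $\M$ lies in $\E$: you gesture at producing an inverse, at splitting $h$ via a weak equalizer of $(\id_C,\text{candidate})$, and at diagonal fill-ins, but none of this is carried out, and the motivating claim --- that the at-most-one hypothesis makes parallel maps \emph{into} $C$ agree --- is false as stated: the hypothesis constrains maps \emph{out of} $C$ only.

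The missing observation is a one-liner, and it needs neither weak equalizers nor a section. To show $C$ is $\M$-minimal it suffices, by \autoref{minimalInE} with $\E=\Epi$, to show that \emph{every} $h\colon B\to C$ is an epimorphism. So take $u,v\colon C\to D$ with $u\cdot h=v\cdot h$; the hypothesis \textqt{at most one morphism $C\to D$} gives $u=v$ outright (the equation $u\cdot h=v\cdot h$ is not even used). Hence $h$ is epic; in particular any $h\in\M$ into $C$ lies in $\E\cap\M$ and is an isomorphism by \itemref{rem:EM}{rem:EM:iso}. Consequently, weak equalizers are needed \emph{only} in the \singlequote{only if} direction (the one you handled correctly), whereas your closing paragraph predicts the opposite; chasing a split inverse for $h$ in the \singlequote{if} direction is a dead end, since the hypothesis gives you no way to construct any morphism out of $C$, only to identify pairs of them.
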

The name \emph{subterminal} stems from the fact that if $\K$ has a terminal
object, its subobjects are the subterminal objects.

\begin{instance}
  For $\K := \Coalg(F)^\dual$, assume $\M=\Mono$ and that $F$ preserves weak
  kernel pairs and that the base category $\C$ has coequalizers. Hence, the monomorphisms in $\Coalg(F)$ are precisely the
  $\Mono$-carried homomorphisms~(\autoref{coalgMono}) and the assumption of
  \autoref{minimalSubterminal} is met. Consequently, the simple
  coalgebras are precisely the \emph{subterminal} coalgebras. If the final
  coalgebra exists, then its subcoalgebras are precisely the simple coalgebras.
  For a non-example,
  Gumm and Schröder~\cite[Example 3.5]{GS05} provide a functor not 
  preserving weak kernel pairs and a subterminal coalgebra that is not simple.
\end{instance}

\begin{figure}
  \begin{tabular}{@{}c@{\ }lc@{}c@{\,}l@{}}
    \cmidrule[\heavyrulewidth]{1-2}
    \cmidrule[\heavyrulewidth]{4-5}
    \multicolumn{2}{@{}l}{$X$ is $\M$-minimal}
    &
      \hspace*{5mm} %
    & \multicolumn{2}{@{}l}{$X$ is $\E$-minimal}
    \\
    \cmidrule{1-2} \cmidrule{4-5}
    \ensuremath{\Leftrightarrow} &
    every $Y\monoto X$ is an isomorphism
    && \ensuremath{\Leftrightarrow} &every $X\epito Y$ is an isomorphism
    \\
    \cmidrule{1-2} \cmidrule{4-5}
    \ensuremath{\Leftrightarrow} &
    every $Y\to X$ is in $\E$
    &&\ensuremath{\Leftrightarrow} & every $X\to Y$ is in $\M$
    \\
    \cmidrule{1-2} \cmidrule{4-5}
    &
    \makecell[l]{
      \textshaded{if $\E=\Epi$ and $\K$ has weak equalizers:} \\
    }
    &&& \makecell[l]{
       \textshaded{if $\M=\Mono$ and $\K$ has weak coequalizers:}
       }
    \\
    \ensuremath{\Leftrightarrow} &
      all parallel $X\rightrightarrows Y$ equal
    && \ensuremath{\Leftrightarrow} &
      all parallel $Y\rightrightarrows X$ equal
      (\textqt{X subterminal})
    \\
    \cmidrule[\heavyrulewidth]{1-2}
    \cmidrule[\heavyrulewidth]{4-5}
  \end{tabular}
  \caption{Equivalent characterizations of $\M$-minimality and
    $\E$-minimality in a category $\K$}
  \label{figCharacterizations}
\end{figure}

We have now established a series of equivalent characterizations of minimality
(\autoref{figCharacterizations}) and will now discuss how to construct minimal objects.
This process of minimization -- i.e.~of constructing the reachable part or the
simple quotient of a coalgebra -- is abstracted  as follows:

\begin{definition}
  An \emph{$\M$-minimization}\index{minimization@$\M$-minimization} of $C\in \K$
  is a morphism $m\colon D\monoto C$ in $\M$ where $D$ is $\M$-minimal.
\end{definition}

In fact, we will show that an $\M$-minimization is unique, so we can speak of
\emph{the} $\M$-minimization.

\begin{instance}
  The task of finding an $\M$-minimization of a given $C\in \K$ instantiates to
  the standard minimization tasks on coalgebras:
  \begin{itemize}
  \item For $\K:=\Coalg_I(F)$, an $\M$-minimization of a given pointed coalgebra
    $(C,c,i_C)$ is called its \emph{reachable subcoalgebra}~\cite{amms13}. This is a
    subcoalgebra obtained by removing all unreachable states. The explicit
    definition is: the reachable subcoalgebra of $(C,c,i_C)$ is a (pointed)
    subcoalgebra $h\colon (R,r,i_R)\monoto (C,c,i_C)$ where $(R,r,i_R)$ itself has no proper
    (pointed) subcoalgebras.
  \item For $\K:= \Coalg(F)^\dual$, an $\E$-minimization of a given coalgebra
    $(C,c)$ is called the \emph{simple quotient} of $(C,c)$~\cite{Gumm03}. The
    explicit definition is: the simple quotient of $(C,c)$ is a quotient
    $h\colon (C,c)\epito (Q,q)$ where $(Q,q)$ itself has no proper quotient
    coalgebra.

    In \Set, this is a quotient in which all behaviourally equivalent states are
    identified, in other words, the simple quotient of $(C,c)$ is the unique
    coalgebra structure on $C/\mathord{\sim}$. Examples of simple quotients can
    be found in \autoref{figSimpleQuot}. Since all states in the codomain of the
    surjective homomorphisms are behaviourally inequivalent, the respective
    codomains are simple.
  \end{itemize}
\end{instance}

\begin{figure}
  \begin{subfigure}{.5\textwidth}
    \begin{tikzpicture}[coalgebra,baseline=(q0.base)]
      \node[state] (q0) {$\bullet$};
      \node[state] (q1) at (1.3,0.5) {$\bullet$};
      \node[state] (q2) at (1.3,-0.5) {$\bullet$};
      \draw[transition] ([xshift=-3mm]q0.west) to (q0.west);
      \draw[transition] (q0) to (q1);
      \draw[transition] (q0) to (q2);
      \draw[transition] (q1) to (q2);
      \draw[transition,loop right] (q1) to (q1);
    \end{tikzpicture}
    \begin{tikzcd}
      {} \arrow[->>]{r} & {}
    \end{tikzcd}
    \begin{tikzpicture}[coalgebra,baseline=(q0.base)]
      \node[state] (q0) {$\bullet$};
      \node[state] (q1) at (1.3,0) {$\bullet$};
      \draw[transition] ([xshift=-3mm]q0.west) to (q0.west);
      \draw[transition] (q0) to (q1);
      \draw[transition,loop above] (q0) to (q0);
    \end{tikzpicture}
    \subcaption{For $FX=\Pow X$}
  \end{subfigure}%
  \hfill
  \begin{subfigure}{.48\textwidth}
        \begin{tikzpicture}[coalgebra,baseline=(q0.base)]
          \node[state] (q0) {$\bullet$};
          \node[state] (q1) at (1.3,0.5) {$\bullet$};
          \node[state] (q2) at (1.3,-0.5) {$\bullet$};
          \draw[transition] ([xshift=-3mm]q0.west) to (q0.west);
          \draw[transition] (q0) to node[above left] {4}(q1);
          \draw[transition] (q0) to node[below left] {-7} (q2);
          \draw[transition] (q1) to node[right] {5}(q2);
          \draw[transition,loop right,overlay] (q2) to node[right] {5}(q2);
        \end{tikzpicture}
        \begin{tikzcd}
          {} \arrow[->>]{r}{}
          & {}
        \end{tikzcd}
        \begin{tikzpicture}[coalgebra,baseline=(q0.base)]
          \node[state] (q0) {$\bullet$};
          \node[state] (q1) at (1,0) {$\bullet$};
          \draw[transition] ([xshift=-3mm]q0.west) to (q0.west);
          \draw[transition] (q0) to node[above] {-3} (q1);
          \draw[transition,loop right] (q1) to node[right] {5}(q1);
        \end{tikzpicture}
    \subcaption{For $FX=(\R,+,0)^{(X)}$}
  \end{subfigure}
  \caption{Examples of simple quotients in $F$-coalgebras}
  \label{figSimpleQuot}
\end{figure}
  
\begin{samepage}
\begin{example} \label{trivMinimal}
  For the trivial factorization systems (\autoref{trivFact}), we have:
  \begin{itemize}
  \item For the $(\Iso,\Mor)$-factorization system, the $\Iso$-minimization of
    an object $X$ is $X$ itself.
  \item For the $(\Mor,\Iso)$-factorization system on category, if a strict
    initial object $0$ exist, then it is the $\Mor$-minimization of every $X\in
    \C$. Recall that an initial object $0$ is called \emph{strict} if every
    morphism with codomain $0$ is an isomorphism.
  \end{itemize}
\end{example}
\end{samepage}

It is well-defined to speak of \emph{the} $\M$-minimization of an object $C$,
because it is unique:

\begin{lemma}\label{MminLeast}
  Consider $h\colon M\to C$ with $\M$-minimal $M$ and an $\M$-subobject
  $s\colon S\monoto C$. The pullback of $s$ along $h$ exists iff $h$ factors
  uniquely through $s$, that is, iff there is a unique $u\colon M\to S$ with $s\cdot u = h$.
\end{lemma}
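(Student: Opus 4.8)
The plan is to prove the two implications separately, the whole argument resting on the reformulation of $\M$-minimality from \autoref{minimalInE}: every morphism with codomain $M$ lies in $\E$.

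For the left-to-right implication, I would start from the assumed pullback
\[
\begin{tikzcd}
P \arrow{r}{p_2} \arrow{d}[swap]{p_1} & S \arrow{d}{s} \\
M \arrow{r}{h} & C
\end{tikzcd}
\]
of $s$ along $h$. Since $s\in\M$ and $\M$ is stable under pullbacks (\itemref{rem:EM}{rem:EM:pullback}), the projection $p_1$ lies in $\M$; but $p_1$ is a morphism into the $\M$-minimal object $M$, hence also lies in $\E$ by \autoref{minimalInE}. As $\E\cap\M$ is exactly the class of isomorphisms (\itemref{rem:EM}{rem:EM:iso}), $p_1$ is invertible, and I set $u:=p_2\cdot p_1^{-1}$; the commuting square gives $s\cdot u=h\cdot p_1\cdot p_1^{-1}=h$. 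For uniqueness, any $u'$ with $s\cdot u'=h$ produces a cone $(\id_M,u')$ on the pullback diagram, whose mediating morphism $v\colon M\to P$ satisfies $p_1\cdot v=\id_M$ and thus $v=p_1^{-1}$, forcing $u'=p_2\cdot v=u$.

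For the converse, suppose $h$ factors uniquely through $s$ via $u\colon M\to S$. I claim that $M$ itself, together with $\id_M\colon M\to M$ and $u\colon M\to S$, is the pullback of $s$ along $h$; this cone commutes because $s\cdot u=h$. Given any competing cone $(f\colon T\to M,\ g\colon T\to S)$ with $h\cdot f=s\cdot g$, the equation $\id_M\cdot w=f$ forces the mediator to be $w:=f$, so everything reduces to checking $u\cdot f=g$. The point is that $f$ lies in $\E$ by \autoref{minimalInE} (as $M$ is $\M$-minimal) while $s\in\M$, so the cone equation $h\cdot f=s\cdot g$ is precisely a square of the shape required by \itemref{D:factSystem}{diagonalization}. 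This yields a unique diagonal $d\colon M\to S$ with $s\cdot d=h$ and $d\cdot f=g$; the first equation exhibits $d$ as a factorization of $h$ through $s$, so the uniqueness hypothesis gives $d=u$, and the second equation then reads $u\cdot f=g$ as desired.

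I expect the main obstacle to be exactly this last step. The naive move would be to cancel $s$ in $s\cdot(u\cdot f)=s\cdot g$, but this is illegitimate because $s$ need not be monic (the factorization system is not assumed proper, so $\M\subseteq\Mono$ may fail). The resolution is to notice that $\M$-minimality forces the comparison map $f$ into $\E$, which is precisely what turns the cone square into an instance of the diagonal fill-in; the uniqueness of the factorization of $h$ then pins the resulting diagonal down to $u$, replacing the missing cancellation property with an axiom of the factorization system.
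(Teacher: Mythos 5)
Your proof is correct and follows essentially the same route as the paper's: in both directions you use the cone $(M,\id_M,u)$ as the pullback, the reformulation of minimality from \autoref{minimalInE} together with the diagonal fill-in of \itemref{D:factSystem}{diagonalization} to verify universality, and pullback-stability of $\M$ plus minimality to invert the projection in the converse direction. Your closing remark about why one cannot simply cancel $s$ (since $\M\subseteq\Mono$ is not assumed) correctly identifies the point that makes the diagonal fill-in indispensable, exactly as in the paper's argument.
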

\[
  \begin{tikzcd}[column sep=5pt]
    &C
    \\
    M
    \arrow[->]{ur}{h}
    \arrow[->,dashed]{rr}{\exists!u}
    &&
    S
    \arrow[>->]{ul}[swap]{\forall s\in \M}
  \end{tikzcd}
  ~~\text{(in $\K$)}
  \qquad
  \begin{tikzcd}[column sep=5pt]
    &C
    \\
    M
    \arrow[<-]{ur}{h}
    \arrow[<-,dashed]{rr}{\exists!u}
    &&
    Q
    \arrow[<<-]{ul}[swap]{\forall q\in \E}
  \end{tikzcd}
  ~~\text{(in $\K^\dual$)}
\]
\begin{proof}
  In the `if'-direction, let $d\colon M\to S$ be the unique morphism with
  $s\cdot d = h$. The pullback is simply given by $M$ itself with
  projections $\id_M\colon M\to M$ and $d\colon M\to S$.
  To verify its universal
  property, consider $e\colon E\to M$, $f\colon E\to S$ with $h\cdot e = s\cdot
  f$ (\autoref{fig:MminIf}).
  Since $M$ is $\M$-minimal, $e\colon E\to M$ is in $\E$ (\autoref{minimalInE}).
  Thus, we can apply the diagonal lifting property to $h\cdot e = s\cdot f$
  yielding a diagonal $u$ with $s\cdot u=h$ and $u\cdot e = f$.
  Thus, $d = u$ and $d\cdot e = f$, showing that $e\colon (E,e,f)\to (M,\id_M,d)$ is the
  mediating cone morphism. Its uniqueness is clear because $\id_M$ is isomorphic.
  \begin{figure}[h!]
    \begin{subfigure}[b]{.3\textwidth}
      \begin{tikzcd}
        M
        \arrow{r}{h}
        \arrow[dashed]{dr}[description]{u}
        & C
        \\
        E
        \arrow[->>]{u}{e}
        \arrow{r}{f}
        & S
        \arrow[>->]{u}[swap]{s}
      \end{tikzcd}
      \caption{`If'-direction}
      \label{fig:MminIf}
    \end{subfigure}%
    \begin{subfigure}[b]{.3\textwidth}%
      \begin{tikzcd}
        P
        \arrow[>->]{r}{\phi}
        \arrow[->]{d}[swap]{d}
        \pullbackangle{-45}
        &
        M
        \arrow[->]{d}{h}
        \\
        S
        \arrow[>->]{r}{s}
        & C
      \end{tikzcd}
      \caption{Premise of `only if'}
      \label{fig:MminPullback}
    \end{subfigure}
    \begin{subfigure}[b]{.3\textwidth}
      \begin{tikzcd}[row sep=0mm,column sep=7mm]
        |[yshift=2mm]|
        M
        \arrow[bend left=20]{rr}{\id_M}
        \arrow[bend right=10]{dr}[swap]{v}
        \arrow[dashed]{r}{u}
        &
        P
        \arrow{r}[pos=0.2]{\phi}
        \arrow{d}[swap]{d}
        \pullbackangle{-45}
        &
        M
        \arrow[->]{d}{h}
        \\[4mm]
        & S
        \arrow[>->]{r}{s}
        & C
      \end{tikzcd}
      \caption{Uniqueness for `only if'}
      \label{fig:MminUMP}
    \end{subfigure}
    \caption{Diagrams for the proof of \autoref{MminLeast}}
  \end{figure}
  
  In the `only if'-direction, consider the pullback $(P,\phi,d)$ (\autoref{fig:MminPullback}).
  Since $\M$-morphisms are stable under pullback
  (\itemref{rem:EM}{rem:EM:pullback}), $\phi$ is in
  $\M$, too. By the minimality of $M$, the $\M$-morphism $\phi$ is an isomorphism
  and we have $d\cdot \phi^{-1}\colon M\to S$.

  In order to see that $d\cdot \phi^{-1}$ is indeed the unique morphism $M\to S$ making the
  triangle commute, consider an arbitrary $v\colon M\to S$ with $s\cdot v = h =
  h\cdot \id_M$.
  Thus, $M$ is a competing cone for the pullback $P$ and thus induces a morphism
  $u\colon M\to P$ with $d\cdot u = v$ and $\phi\cdot u =\id_M$ (\autoref{fig:MminUMP}). Since $\phi$ is
  an isomorphism, we have $u=\phi^{-1}$ and thus $v=d\cdot \phi^{-1}$ as desired.
\end{proof}
\begin{corollary}
  \label{uniqueMinimization}
  If $\K$ has pullbacks of $\M$-morphisms along $\M$-morphisms and if there is an
  $\M$-minimization $M$ of $C$, then $M$ is the least $\M$-subobject
  of $C$ and it is unique (up to unique isomorphism).
\end{corollary}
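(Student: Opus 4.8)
The plan is to obtain both assertions directly from \autoref{MminLeast}. Write $m\colon M\monoto C$ for the given $\M$-minimization, so that $m\in \M$ and $M$ is $\M$-minimal.

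For the leastness, I would take an arbitrary $\M$-subobject $s\colon S\monoto C$ and aim to produce a unique $u\colon M\to S$ with $s\cdot u = m$; by \autoref{D:subobjectEM} this is precisely the statement $(M,m)\le (S,s)$. To get it, I apply \autoref{MminLeast} to $h:=m\colon M\to C$ together with the $\M$-subobject $s$. The minimality hypothesis of that lemma holds by assumption, and its other hypothesis, the existence of the pullback of $s$ along $m$, is exactly where the assumption of the corollary enters: since both $s$ and $m$ lie in $\M$, this is a pullback of an $\M$-morphism along an $\M$-morphism, and such pullbacks exist in $\K$ by hypothesis. Hence the lemma yields the desired unique $u$. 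As $s$ ranges over all $\M$-subobjects of $C$, this shows that $M$ sits below every $\M$-subobject, i.e.~it is the least one.

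For uniqueness, suppose $m'\colon M'\monoto C$ is a second $\M$-minimization. Applying the leastness just established to both objects gives unique morphisms $u\colon M\to M'$ with $m'\cdot u=m$ and $u'\colon M'\to M$ with $m\cdot u'=m'$. Composing, $m\cdot (u'\cdot u)=m'\cdot u=m=m\cdot \id_M$, so $u'\cdot u$ and $\id_M$ are both factorizations of $m$ through $m$. Invoking \autoref{MminLeast} once more, this time with $h=m$ and $s=m$ (the relevant pullback again existing because $m\in \M$), the factorization of $m$ through $m$ is unique, whence $u'\cdot u=\id_M$; symmetrically $u\cdot u'=\id_{M'}$. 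Thus $u$ is an isomorphism, and it is the unique such morphism by the uniqueness clause of the lemma, giving uniqueness up to unique isomorphism.

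I expect the only real subtlety to be the recognition that the corollary's pullback assumption is tailored exactly to discharge the existence-of-pullback premise of \autoref{MminLeast} whenever the competing subobject $s$ and the minimization $m$ are both in $\M$; once this is seen, no computation is required. I would also take care to route the uniqueness of the isomorphism through the uniqueness clause of \autoref{MminLeast} rather than through monomorphism cancellation, so that the argument does not secretly require $\M\subseteq \Mono$.
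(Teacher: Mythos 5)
Your proof is correct and takes essentially the same approach as the paper: the paper's own (one-line) proof is precisely to apply \autoref{MminLeast} first with $h=m\in\M$, so that the corollary's pullback hypothesis discharges the lemma's premise and yields leastness, and then again with the competing subobject $S$ also $\M$-minimal to extract the unique isomorphism. Your explicit handling of uniqueness by composing the two factorizations and invoking the uniqueness clause of \autoref{MminLeast} (rather than cancelling monomorphisms) is exactly the right unpacking, since the paper's setting does not assume $\M\subseteq\Mono$.
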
%
\begin{proof}
  Consider \autoref{MminLeast} first for $h\in \M$ and then also with
  $S$ being $\M$-minimal.
\end{proof}

\begin{instance} Not only the result but also the proof instantiates to
  the uniqueness results in the instances of reachable subcoalgebras and simple quotients:
  \begin{listinenv}
  \begin{enumerate}
  \item If $\C$ has pullbacks of $\M$-morphisms
    (i.e.~finite intersections) and $F\colon \C\to \C$ preserves them,
    then $\Coalg_I(F)$ has pullbacks of $\M$-carried homomorphisms. Given a reachable
    subcoalgebra $(D,d,i_D)$ of $(C,c,i_C)$, then it is the least $I$-pointed
    subcoalgebra of $(C,c,i_C)$ (cf.~\cite[Notation~3.18]{amms13}) and is unique
    up to isomorphism.
  \item If $\C$ has pushouts of $\E$-morphisms, then $\Coalg(F)$ has pushouts of
    $\E$-carried homomorphisms. Hence, the simple quotient of a coalgebra
    $(C,c)$ is the greatest quotient of $(C,c)$ and unique up to isomorphism
    (e.g.~\cite[Lemma~2.9]{WissmannDMS19}).
  \end{enumerate}
  \end{listinenv}
\end{instance}

There are instances where a minimization $M$ exists, but where a mediating
morphism in the sense of \autoref{MminLeast} is not unique:

\begin{example}[Tree unravelling]
  \label{treeUnravel}
  Let $\Coalg_I(F)_\reach$ be the category of reachable pointed $F$-coalgebras,
  i.e.~the full subcategory $\Coalg_I(F)_\reach\subseteq \Coalg_I(F)$ such that
  $(C,c,i_C) \in \Coalg_I(F)_\reach$ iff it is reachable. For simplicity,
  restrict to $F\colon \Set\to\Set$ with the $(\Epi,\Mono)$-factorization
  system. Thus, all morphisms in $\Coalg_I(F)_\reach$ are surjective
  (\autoref{minimalInE}). Considering the (trivial) $(\Mor,\Iso)$-factorization
  system on $\Coalg_I(F)_\reach$, a coalgebra $(C,c,i_C)$ is ($\Mor$-)minimal iff
  every coalgebra morphism $h\colon (D,d,i_D)\to (C,c,i_C)$ (with $(D,d,i_D)$
  also reachable) is an isomorphism. If $(C,c,i_C)$ is $\Mor$-minimal, then it
  is a tree: to see this, take $h$ to be its tree unravelling (see e.g.~\autoref{fig:unravel}),
  and by the $\Mor$-minimality, $h$ is an isomorphism, so $(C,c,i_C)$ is already a tree.
  \hackynewpage

  This implies that if the ($\Mor$-)minimization of a coalgebra exists, then it is
  its tree unravelling. For example, for $I=1$ and the bag functor $FX=\Bag X$,
  we have the minimizations as illustrated in \autoref{fig:unravel}. It is easy
  to see that for $FX=\Pow X$ however, no coalgebra (with at least one
  transition) has a $\Mor$-minimization, because one can always duplicate
  successor states.\footnote{%
    It can be conjectured that $\Mor$-minimization of
    reachable $F$-coalgebras exist if $F$ admits precise
    factorizations~\cite[Def.~3.1, 3.4]{WissmannDKH19}.}

  For $FX=\Bag X$, all $\Mor$-minimizations exist, but they are not unique up to
  unique isomorphism.
  Consider the tree unravelling $m\colon M\epito X$ in
  \autoref{fig:unravel:sib}. There is an isomorphism $\phi\colon M\to M$ that
  swaps the two successors of the initial state. Hence, $\phi\neq \id_B$, but
  $m\cdot \phi = m\cdot \id_M$, so $M$ is unique up to isomorphism, but not
  unique up to unique isomorphism.
\end{example}
\begin{figure}
  \begin{subfigure}{.5\textwidth}
        \begin{tikzpicture}[coalgebra,baseline=(q0.base)]
          \node[state] (q0) {$\bullet$};
          \node[state] (q1) at (1.3,0.5) {$\bullet$};
          \node[state] (q2) at (1.3,-0.5) {$\bullet$};
          \draw[transition] ([xshift=-3mm]q0.west) to (q0.west);
          \draw[transition] (q0) to node[above left] {}(q1);
          \draw[transition] (q0) to node[below left] {} (q2);
        \end{tikzpicture}
        \begin{tikzcd}
          {} \arrow[->>]{r} & {}
        \end{tikzcd}
        \begin{tikzpicture}[coalgebra,baseline=(q0.base)]
          \node[state] (q0) {$\bullet$};
          \node[state] (q1) at (1.5,0) {$\bullet$};
          \draw[transition] ([xshift=-3mm]q0.west) to (q0.west);
          \draw[transition,bend left=20] (q0) to node[above] {}(q1);
          \draw[transition,bend right=20] (q0) to node[below] {} (q1);
        \end{tikzpicture}
        \subcaption{Unravelling of siblings}
        \label{fig:unravel:sib}
  \end{subfigure}%
  \hfill
  \begin{subfigure}{.48\textwidth}
        \begin{tikzpicture}[coalgebra,baseline=(q0.base)]
          \node[state] (q0) {$\bullet$};
          \node[state] (q1) at (1,0) {$\bullet$};
          \node[state] (q2) at (2.3,0) {$\cdots$};
          \draw[transition] ([xshift=-3mm]q0.west) to (q0.west);
          \draw[transition] (q0) to node[above] {}(q1);
          \draw[transition] (q1) to node[above] {} (q2);
        \end{tikzpicture}
        \begin{tikzcd}
          {} \arrow[->>]{r} & {}
        \end{tikzcd}
        \begin{tikzpicture}[coalgebra,baseline=(q0.base)]
          \node[state] (q0) {$\bullet$};
          \draw[transition] ([xshift=-3mm]q0.west) to (q0.west);
          \draw[transition,loop right,out=30,in=-30,looseness=10]
              (q0) to node[right] {}(q0);
        \end{tikzpicture}
        \subcaption{Unravelling of a loop}
  \end{subfigure}
  \caption{Tree unravelling for $FX=\Bag X$}
  \label{fig:unravel}
\end{figure}

For proving the existence of an $\M$-minimization, we need to require that $\M$
is a subclass of the monomorphisms in $\K$. Under this assumption, we first
establish the converse of \autoref{uniqueMinimization}:
\begin{lemma}
  \label{leastIsMinimal}
  If $\M\subseteq \Mono$ and if the least $\M$-subobject $M$ of $X$ exists, then
  $M$ is the $\M$-minimization of $X$.
\end{lemma}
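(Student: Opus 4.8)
The plan is to reduce everything to showing that the least $\M$-subobject of $X$ is itself $\M$-minimal. Write $m\colon M\monoto X$ for the representing $\M$-morphism. Since $m$ is already in $\M$, once we know $M$ is $\M$-minimal, $m$ exhibits $M$ as an $\M$-minimization of $X$ by definition. So the entire content is the minimality of $M$: I would take an arbitrary $h\colon D\monoto M$ in $\M$ and prove that it is an isomorphism.

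First I would form the composite $m\cdot h\colon D\to X$. As $\M$ is closed under composition (\autoref{rem:EM}), $m\cdot h$ is again in $\M$, so $(D,m\cdot h)$ is a genuine $\M$-subobject of $X$. Now I invoke the hypothesis that $M$ is the \emph{least} $\M$-subobject: this gives a comparison morphism $g\colon M\to D$ with $(m\cdot h)\cdot g = m$.

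Then I would extract the two inverse laws. Since $m\in\M\subseteq\Mono$ is monic, cancelling $m$ on the left of $m\cdot(h\cdot g)=m$ yields $h\cdot g=\id_M$. For the reverse composite I use that $h$ is monic as well (again $\M\subseteq\Mono$): from $h\cdot(g\cdot h)=(h\cdot g)\cdot h=h=h\cdot\id_D$ and monicity of $h$ I obtain $g\cdot h=\id_D$. Hence $g$ is a two-sided inverse of $h$, so $h$ is an isomorphism, and $M$ is $\M$-minimal as required.

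The routine part is the first inverse law $h\cdot g=\id_M$, which drops out immediately from leastness together with $m$ being monic; the one subtle point is the reverse law $g\cdot h=\id_D$, obtained by the short cancellation trick on the mono $h$. This is precisely where the assumption $\M\subseteq\Mono$ does its work — it is used twice, once for $m$ and once for $h$ — and without it the comparison $g$ would in general only be a one-sided inverse, so minimality could fail.
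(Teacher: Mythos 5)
Your proof is correct and follows essentially the same route as the paper: compose the candidate subobject of $M$ with $m$, use leastness to get the comparison morphism, and cancel monos to conclude. The only cosmetic difference is that the paper phrases the final step as ``split epimorphism $+$ monomorphism $\Rightarrow$ isomorphism'', which is exactly the cancellation argument you spell out explicitly for $g\cdot h=\id_D$.
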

\begin{proof}
  Let $m\colon M\monoto X$ be the least $\M$-subobject of $X$, and consider
  $s\colon S\monoto M$ in $\M$. Since $m\cdot s\in \M$, there is some $u\colon
  M\to S$ with $(m\cdot s)\cdot u = m$. Since $m$ is monic, we obtain $s\cdot u
  = \id_M$. Hence, $s$ is a split-epimorphism, and together with
  $s\in\M\subseteq \Mono$, $s$ is an isomorphism.
\end{proof}
\begin{proposition}
  \label{existenceMinimization}
  If $\M\subseteq \Mono$, $\K$ has wide pullbacks of
  $\M$-morphisms, and is $\M$-wellpowered, then
  every object $C$ of $\K$ has an $\M$-minimization.
\end{proposition}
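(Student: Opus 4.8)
The plan is to construct the least $\M$-subobject of $C$ as the \textqt{intersection} of \emph{all} $\M$-subobjects, and then convert it into an $\M$-minimization via \autoref{leastIsMinimal}. First I would invoke $\M$-wellpoweredness to choose a (necessarily small) representative family $(s_i\colon S_i\monoto C)_{i\in I}$ containing, up to isomorphism, every $\M$-subobject of $C$; this family is nonempty, as $\id_C$ is itself an $\M$-subobject. Since $\K$ is assumed to have wide pullbacks of $\M$-morphisms, I can then form the wide pullback $(\pr_i\colon P\to S_i)_{i\in I}$ of this family, writing $p\colon P\to C$ for the common composite $p = s_i\cdot \pr_i$ that every leg induces.

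The next step is to verify that $(P,p)$ is itself an $\M$-subobject. By \autoref{EM:pullbackwide}, each projection $\pr_j\colon P\to S_j$ lies in $\M$, because all the remaining legs $s_i$ (for $i\neq j$) are in $\M$. As $\M$ is closed under composition (\autoref{rem:EM}), the morphism $p = s_j\cdot \pr_j$ is in $\M$ as well, so $(P,p)$ is a genuine $\M$-subobject of $C$.

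It then remains to argue that $(P,p)$ is the \emph{least} $\M$-subobject. For each index $j$ the projection $\pr_j$ satisfies $s_j\cdot \pr_j = p$, which by the definition of the subobject order (\autoref{D:subobjectEM}) witnesses $(P,p)\le (S_j,s_j)$. Since every $\M$-subobject of $C$ is isomorphic to one of the $(S_j,s_j)$, the subobject $(P,p)$ lies below all of them; being itself an $\M$-subobject, it is therefore the least. Finally, with the standing hypothesis $\M\subseteq\Mono$, \autoref{leastIsMinimal} upgrades this least $\M$-subobject to an $\M$-minimization of $C$, which completes the argument.

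I expect the delicate point to be the very first step: one must genuinely exploit $\M$-wellpoweredness so that the intersection ranges over a \emph{set} of subobjects (otherwise the wide pullback need not exist, and the limit may fail to be small), and then hand exactly this family to \autoref{EM:pullbackwide} to land $p$ in $\M$. Once $(P,p)$ is known to be an $\M$-subobject lying below every other, everything else is routine bookkeeping with the subobject order together with a direct appeal to \autoref{leastIsMinimal}.
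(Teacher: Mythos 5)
Your proposal is correct and follows essentially the same route as the paper's own proof: use $\M$-wellpoweredness to obtain a representative set of $\M$-subobjects, form their wide pullback, apply \autoref{EM:pullbackwide} to place the projections (and hence the composite $p$) in $\M$, observe that this yields the least $\M$-subobject, and conclude via \autoref{leastIsMinimal}. Your explicit remark that the family is nonempty (via $\id_C\in\M$) is a small point the paper leaves implicit, but otherwise the arguments coincide.
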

\begin{proof}
  Since $\K$ is $\M$-wellpowered, all the $\M$-carried morphisms $m\colon
  M\monoto C$ form up to isomorphism a set $S$. The wide pullback of all $m\in
  S$ exists in $\K$ by assumption, denote it by $\pr_m\colon P \to M$ for
  $m\colon M\monoto C$. All $m'\in S$ are in $\M$ and so are all $\pr_m$
  by \autoref{EM:pullbackwide}. Hence, $p := m\cdot \pr_m\colon P\monoto C$ for
  an arbitrary $m\in S$ represents an $\M$-subobject, and moreover the least
  $\M$-subobject of $C$, as witnessed by the projections $\pr_m$. By
  \autoref{leastIsMinimal}, $P$ is the minimization of $C$.
\end{proof}
\begin{instance} \label{exExistenceCoalgMinimal}
  This proof directly instantiates to the proofs of the existence of the
  reachable subcoalgebra and simple quotient:
  \begin{listinenv}
    \begin{enumerate}
    \item In the reachability case, let $\M$ be a subclass of the monomorphisms, let
      the base category $\C$ have all $\M$-intersections, and let $F\colon \C\to
      \C$ preserve all intersections. Then the reachable part of a given pointed
      coalgebra $(C,c,i_C)$ is obtained as the intersection of all pointed
      subcoalgebras of $(C,c,i_C)$ \cite{amms13}.

      For $\C=\Set$, and $\M$ being the class of injective maps, all
      intersections exist. The condition that $F\colon \Set\to\Set$ preserves
      all intersections is mild: all finitary functors preserve all
      intersections~(\cite[Proof of Lem.~8.8]{amm18} or \cite[Lem.~2.6.10]{Wissmann2020}) and many non-finitary functors do
      as well, e.g.~the powerset functor. An example of a functor that does not
      preserve all intersections is the filter functor~\cite[Sect.~5.3]{Gumm2001}.

    \item For the existence of simple quotients, let $\E$ be a subclass of
      the epimorphisms and let the base category $\C$ be cocomplete and
      $\E$-cowellpowered. Then every $F$-coalgebra (C,c) has a simple quotient
      given by the wide pushout of all quotient coalgebras
      (\cite[Proposition 3.7]{amms13}, and \cite{Gumm08} for $\C=\Set$).

      Every set has only a set of outgoing surjective maps, so all assumptions
      are met for $\C=\Set$, $\E$ the surjective maps, and every $\Set$-functor $F$.
    \end{enumerate}
  \end{listinenv}
\end{instance}

\begin{remark}
  All observations on simple quotients also apply to pointed coalgebras:
    An $I$-pointed $F$-coalgebra is simple iff it is
      $\E$-carried-minimal in $\K:=\Coalg_I(F)^\dual$. 
      The forgetful functor
      \[
        \Coalg_I(F) \longrightarrow \Coalg(F)
      \]
      preserves and reflects simple coalgebras and simple quotients (note that
      for every pointed coalgebra $(C,c,i_C)$, the slice categories
      $(C,c,i_C)/\Coalg_I(F)$ and $(C,c)/\Coalg(F)$ are isomorphic). For the sake
      of simplicity, we will not state the results explicitly for simple
      coalgebras in $\Coalg_I(F)$.
\end{remark}

\begin{definition}
  We denote by $J\colon \K_{\min}\hookto \K$ the full subcategory formed by the
  $\M$-minimal objects of $\K$.
\end{definition}

In the existence proof of minimal objects
(\autoref{existenceMinimization}) we only required (wide) pullbacks
where all morphisms in the diagram are in $\M$. We obtain additional properties
if we assume the pullback along $\M$-morphisms, i.e.~pullbacks where only one of
the two morphisms is in $\M$:

\begin{proposition}
  \label{MminCorefl}
  Suppose that pullbacks along $\M$-morphisms exist in $\K$ and that every
  object of $\K$ has an $\M$-minimization. Then $J\colon
  \K_{\min}\hookto \K$ is a coreflective subcategory. Its
  right-adjoint $R\colon \K\to \K_{\min}$ ($J\dashv R$) sends an object to its
  $\M$-minimization; in particular, minimization is functorial.
\end{proposition}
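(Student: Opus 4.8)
The plan is to construct the right adjoint $R$ directly from the $\M$-minimization and then verify the universal property of a coreflection. For an object $C$ of $\K$, set $RC := M$ where $m_C\colon M\monoto C$ is the $\M$-minimization of $C$, which exists by assumption and is $\M$-minimal, hence an object of $\K_{\min}$. The counit at $C$ is the minimization morphism $m_C\colon JRC = M \to C$. To establish $J\dashv R$ with this counit, I would show the universal property: for every $\M$-minimal object $D$ (i.e.~$D\in \K_{\min}$) and every morphism $f\colon JD\to C$ in $\K$, there is a \emph{unique} morphism $g\colon D\to RC = M$ in $\K_{\min}$ with $m_C\cdot Jg = f$.

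The key tool for this is \autoref{MminLeast}. Since $D$ is $\M$-minimal and $m_C\colon M\monoto C$ is an $\M$-subobject of $C$, and since pullbacks along $\M$-morphisms exist by hypothesis (so in particular the pullback of $m_C$ along $f\colon D\to C$ exists), \autoref{MminLeast} yields a \emph{unique} morphism $g\colon D\to M$ with $m_C\cdot g = f$. This is exactly the required factorization, and its uniqueness is precisely the uniqueness asserted by the lemma. Thus the counit $m_C$ is couniversal from $J$ to $C$, which by the standard characterization of adjunctions via couniversal arrows establishes $J\dashv R$ and makes $\K_{\min}$ coreflective. Functoriality of $R$ then comes for free: being a right adjoint, $R$ is automatically a functor, and on a morphism $h\colon C\to C'$ it acts by sending $RC$ to the unique $Rh\colon RC\to RC'$ with $m_{C'}\cdot J(Rh) = h\cdot m_C$, again supplied by \autoref{MminLeast}.

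The main subtlety I would watch is the application of \autoref{MminLeast}: that lemma is phrased for a morphism $h\colon M\to C$ out of an $\M$-minimal $M$ and an $\M$-subobject $s\colon S\monoto C$, concluding that the pullback of $s$ along $h$ exists \emph{iff} $h$ factors uniquely through $s$. Here I am in the situation where I know the pullback exists (it is a pullback along the $\M$-morphism $m_C$, guaranteed by the hypothesis of the proposition), and the $\M$-minimal object playing the role of $M$ is the domain $D$; so I apply the \singlequote{only if} direction to obtain the unique factorization. I must make sure the roles are matched correctly: $D$ is the minimal object, $m_C$ is the $\M$-subobject, and $f$ is the arbitrary morphism from the minimal object. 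With this matching the lemma applies verbatim.

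Finally I would note that the counit $m_C$ is itself an $\M$-morphism with $\M$-minimal domain, which fits the coreflection picture: the coreflector picks out the minimization, and the coreflection morphism is the inclusion of the minimal part. I expect no genuine obstacle beyond bookkeeping, since \autoref{MminLeast} already packages the hard content; the remaining work is the routine verification that the factorizations assemble functorially and respect composition and identities, which follows from their uniqueness.
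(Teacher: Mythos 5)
Your proposal is correct and follows essentially the same route as the paper: both define $R$ on objects as the $\M$-minimization, take the minimization morphism as the couniversal arrow, and invoke the \singlequote{only if} direction of \autoref{MminLeast} (pullback along the $\M$-morphism exists, hence the unique factorization) to verify the universal property, with functoriality then automatic from the adjunction. Your role-matching of $D$, $m_C$, and $f$ in \autoref{MminLeast} is exactly how the paper applies the lemma, so there is nothing to add.
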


\begin{proof}
  The universal property of $J$ follows directly from \autoref{MminLeast}: To
  this end, it suffices to consider $R$ as an object assignment. Given a
  morphism $h\colon M\to X$ where $M$ is $\M$-minimal, we need to show that it
  factorizes uniquely through the $\M$-minimization $s\colon S\monoto X$ of
  $D$, that is $RD := S$. Since the pullback of $h$ along $s$ exists by
  assumption, \autoref{MminLeast} yields us the desired unique factorization
  $u\colon M\to S$ with $s\cdot u = h$.
\end{proof}

\begin{samepage}
\begin{instance} \label{exExistsMinimization}
  For both of our main instances, this adjunction has been observed before:
  \begin{listinenv}
    \begin{enumerate}
    \item If $F\colon \C\to \C$ preserves inverse images (w.r.t.~$\M$), then pullbacks along
      $\M$-carried homomorphisms exist in $\Coalg_I(F)$. Hence, the reachable
      $I$-pointed $F$-coalgebras form a coreflective subcategory of
      $\Coalg_I(F)$, where the coreflector maps a pointed coalgebra to its
      reachable part~\cite[Thm~5.23]{wmkd20reachability}

    \item\label{iExistsSimple} The simple coalgebras form a reflective
      subcategory of $\Coalg(F)$, and the reflector sends a coalgebra to its
      simple quotient, under the assumption that the base category has pushouts
      along $\E$-morphisms. For coalgebras in \Set, the adjunction $J\vdash R$ has been shown by
      Gumm~\cite[Theorem 2.3]{Gumm08}.
    \end{enumerate}
  \end{listinenv}
\end{instance}
\end{samepage}

\begin{corollary}\label{MminQuot}
  If pullbacks along $\M$-morphisms exist in $\K$ and all $\M$-minimizations
  exist, then $\M$-minimal objects are closed under $\E$-quotients.
\end{corollary}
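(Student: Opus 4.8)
The plan is to exhibit the $\M$-minimization of the quotient as an isomorphism. So let $M$ be $\M$-minimal and let $e\colon M\epito Q$ be an $\E$-quotient; I want to show that $Q$ is $\M$-minimal. Since all $\M$-minimizations exist by assumption, $Q$ has one, say $s\colon N\monoto Q$ with $N$ itself $\M$-minimal. My goal then reduces to proving that this particular $s$ is an isomorphism: once $s$ is invertible we have $Q\cong N$, and since $\M$-minimality is invariant under isomorphism, $Q$ is $\M$-minimal.

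First I would factor $e$ through $s$. As $\K$ has pullbacks along $\M$-morphisms and $s\in\M$, the pullback of $s$ along $e$ exists, so \autoref{MminLeast} (applied to the $\M$-minimal object $M$, the morphism $e$, and the $\M$-subobject $s$) yields a unique $u\colon M\to N$ with $s\cdot u = e$. This is exactly the couniversal factorization provided by the coreflection of \autoref{MminCorefl}, viewing $e\colon M\to Q$ as a morphism out of the $\M$-minimal object $M$.

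The key step is then to observe that the $\M$-minimality of $N$ forces $u$ into $\E$: by \autoref{minimalInE}, every morphism into the $\M$-minimal object $N$ lies in $\E$, in particular $u\in\E$. Now $e = s\cdot u$ with both $e\in\E$ and its right factor $u\in\E$, so the dual of \itemref{rem:EM}{rem:EM:3} (from a composite in $\E$ whose right factor is in $\E$, the left factor is in $\E$) gives $s\in\E$. Since also $s\in\M$, we conclude $s\in\E\cap\M=\Iso$ by \itemref{rem:EM}{rem:EM:iso}, i.e.\ $s$ is an isomorphism, as desired.

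The step I expect to be the main obstacle -- and the reason both hypotheses are genuinely needed -- is precisely the passage from \emph{$s$ is a split epimorphism} to \emph{$s$ is an isomorphism}. A naive attempt that takes an \emph{arbitrary} $\M$-subobject $s\colon S\monoto Q$ and tries to show it is an isomorphism gets stuck: factoring $e = s\cdot u$ only makes $s$ a split epi, which for a non-proper factorization system (e.g.\ the trivial $(\Iso,\Mor)$-system of \autoref{trivFact}) need not be invertible. The fix is to pass to the \emph{minimization} $N$, whose own $\M$-minimality upgrades $u$ to an $\E$-morphism via \autoref{minimalInE}; this is what lets the dual cancellation turn $s$ into an $\E$-morphism and hence an isomorphism, and it is where the assumption that all $\M$-minimizations exist is essentially used (while the pullback hypothesis is what produces the factorization in the first place).
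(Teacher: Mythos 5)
Your proof is correct and follows essentially the same route as the paper's: factor the $\E$-morphism $e$ through the $\M$-minimization of its codomain (the paper calls this the adjoint transpose from \autoref{MminCorefl}, which is exactly the factorization you extract from \autoref{MminLeast}), use \autoref{minimalInE} to conclude the factor $u$ lies in $\E$, and then apply the dual cancellation of \itemref{rem:EM}{rem:EM:3} together with \itemref{rem:EM}{rem:EM:iso} to see that the $\M$-morphism into the quotient is in $\E\cap\M$, hence an isomorphism. The only difference is presentational: the paper invokes the coreflection directly, while you unfold it via the pullback hypothesis and \autoref{MminLeast}.
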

\hackynewpage
\begin{proof}
  Consider an $\E$-morphism $e\colon C\epito D$ where $C$ is $\M$-minimal. Take
  the adjoint transpose $f\colon C\to RD$ with $m\cdot f = e$ where $m\colon
  RD\monoto D$ is the $\M$-minimization of $D$:
  \[
    \begin{tikzcd}
      C
      \arrow[->>]{dr}{e}
      \arrow{d}[swap]{f}
      \\
      RD
      \arrow[>->]{r}{m}
      & D
    \end{tikzcd}
  \]
  Since $RD$ is $\M$-minimal, $f$ is in $\E$ (\autoref{minimalInE}).
  Moreover, $f\in \E$ and $m\cdot f\in \E$ imply $m\in \E$
  (\itemref{rem:EM}{rem:EM:3}), hence $m\in \E\cap\M$ is an isomorphism.
\end{proof}

\begin{example}
  \label{exQuotClosed}
  \begin{listinenv}
    \begin{enumerate}
    \item\label{exQuotClosed:Reach} If $F$ preserves inverse images, then reachable $F$-coalgebras are
      closed under quotients~\cite[Cor.~5.24]{wmkd20reachability}. Note that if
      $F$ does not preserve inverse images, then a quotient of a reachable
      $F$-coalgebra may not be reachable. For example, in (pointed) coalgebras
      for the monoid-valued functor $(\R,+,0)^{(-)}$ there is the coalgebra
      quotient with $h(b_1)=h(b_2) = b$:
      \begin{center}
        \begin{tikzpicture}[coalgebra,baseline=(q0.base)]
          \node[state] (q0) {$a$};
          \node[state] (q1) at (2,-0.5) {$b_1$};
          \node[state] (q2) at (2,0.5) {$b_2$};
          \draw[transition] ([xshift=-3mm]q0.west) to (q0.west);
          \draw[transition] (q0) to node[below left] {3}(q1);
          \draw[transition] (q0) to node[above left] {-3} (q2);

        \end{tikzpicture}
        \quad
        \begin{tikzcd}
          {} \arrow[->>]{r}{h}
          & {}
        \end{tikzcd}
        \quad
        \begin{tikzpicture}[coalgebra,baseline=(q0.base)]
          \node[state] (q0) {$a$};
          \node[state] (q1) at (1,0) {$b$};
          \draw[transition] ([xshift=-3mm]q0.west) to (q0.west);
        \end{tikzpicture}
      \end{center}
      Since transition weights may cancel out each other ($-3 + 3 = 0$), the
      codomain of $h$ is not reachable even though its domain is.
    \item If the base category $\C$ has pushouts along $\E$-morphisms, then
      simple $F$-coalgebras are closed under subcoalgebras.
      For $\C=\Set$, this is obvious: if in a coalgebra $(C,c)$, all states are
      of pairwise different behaviour, then so they are in every subcoalgebra of
      $(C,c)$.
    \end{enumerate}
  \end{listinenv}%
\end{example}

\takeout{}

\section{Interplay of minimality notions}
\label{secInterplay}

The two main aspects of minimization we have seen -- reachability and minimization
for observability -- are closely connected on an abstract level 
and also interact well as we see in the following. In order to minimize a
pointed coalgebra under both aspects, we have two options: first construct the
reachable part and then the simple quotient, or we first form the simple
quotient and then construct its reachable part. Given the existence of pullbacks
of $\M$-morphisms along arbitrary morphisms,
we can show that any order is fine.

In the abstract setting of a category $\K$ with an $(\E,\M)$-factorization
system we are transforming an object $C\in \K$ into an object $C'$ that is
$\M$-minimal in $\K$ and $\E$-minimal in~$\K^\dual$.
\begin{proposition}
  Suppose $\K$ has an $(\E,\M)$-factorization system such that
  all $\M$-mi\-nim\-iza\-tions in $\K$ and all $\E$-minimizations in $\K^\dual$ exist.
  If $\K$ has pullbacks along $\M$-morphisms and pushouts along $\E$-morphisms,
  then for every $C$ in $\K$ the following two constructions yield the same object:
  \begin{enumerate}
  \item The $\M$-minimization of $C$ in $\K$ followed by its $\E$-minimization
    in $\K^\dual$.
  \item The $\E$-minimization of $C$ in $\K^\dual$ followed by its $\M$-minimization
    in $\K$.
  \end{enumerate}
\end{proposition}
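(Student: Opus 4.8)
The plan is to show that both iterated constructions produce the same object by exhibiting both as solutions to a common universal property, or more directly, by showing that each construction yields an object that is simultaneously $\M$-minimal in $\K$ and $\E$-minimal in $\K^\dual$, together with a canonical map from (or to) $C$, and then invoking the uniqueness results already established. The central difficulty is that, a priori, minimizing in one sense may destroy minimality in the other: taking the $\M$-minimization (a subobject) of $C$ could in principle create new quotients, and taking the $\E$-minimization (a quotient) could create new subobjects. So the real content is a stability statement, and the key tool will be the closure results \autoref{MminQuot} and \autoref{exQuotClosed}.

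First I would set up the notation. Write $R$ for the $\M$-minimization functor (the coreflector $\K\to\K_{\min}$ from \autoref{MminCorefl}) and, dually, write $Q$ for the $\E$-minimization in $\K^\dual$, i.e.\ the reflector onto the $\E$-minimal objects, which exists by the dual of \autoref{MminCorefl} using the assumed pushouts along $\E$-morphisms. Construction~(1) is then $Q(R(C))$ and construction~(2) is $R(Q(C))$. To prove they agree, the plan is to establish the two symmetric closure facts: $(\ast)$ the $\E$-minimization of an $\M$-minimal object is still $\M$-minimal, and $(\ast\ast)$ the $\M$-minimization of an $\E$-minimal object is still $\E$-minimal. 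Fact $(\ast)$ is exactly \autoref{MminQuot}: $\M$-minimal objects are closed under $\E$-quotients, and the $\E$-minimization of $RC$ is such a quotient. Fact $(\ast\ast)$ is its formal dual, the statement of \itemref{exQuotClosed}{} read in $\K^\dual$, namely that $\E$-minimal objects are closed under $\M$-subobjects; this uses the pullbacks along $\M$-morphisms. Consequently $Q(R(C))$ is both $\M$-minimal and $\E$-minimal, and likewise $R(Q(C))$.

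Having both objects live in the intersection $\K_{\min}\cap(\K^\dual)_{\min}$, I would finish by a universal-property argument. Both constructions come equipped with the canonical composite $C\epito RC$ (an $\E$-morphism, since $RC$ is the $\M$-minimization viewed appropriately) and its continuation to the $\E$-minimization, so in each case there is a distinguished morphism relating $C$ to the result; chasing the factorization system, the composite $C\to Q(R(C))$ and $C\to R(Q(C))$ are each in the diagonal class determined by factoring through a minimal object. The cleanest route is to observe that an object that is simultaneously $\M$-minimal and $\E$-minimal, equipped with a morphism from $C$, is unique up to isomorphism: by \autoref{minimalInE} and its dual, any morphism into an $\M$-minimal object is in $\E$ and any morphism out of an $\E$-minimal object is in $\M$, so a comparison morphism between the two results is forced to lie in $\E\cap\M$, hence is an isomorphism by \itemref{rem:EM}{rem:EM:iso}.

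The step I expect to be the main obstacle is verifying that the comparison morphism between $Q(R(C))$ and $R(Q(C))$ actually exists and is natural, rather than merely that both targets are abstractly isomorphic as doubly-minimal objects. To produce it I would factor the canonical map $C\to Q(R(C))$ through the universal properties: the $\M$-minimization $R(Q(C))$ receives a unique map from any $\M$-minimal object mapping into $Q(C)$, and the $\E$-minimization is similarly universal among quotients, so the diagonal fill-in (\itemref{D:factSystem}{diagonalization}) applied to the square built from $C\epito QC$ and $RC\monoto C$ supplies the mediating arrow; one then checks it is invertible using that its source and target are both doubly-minimal. The bookkeeping of which morphism is in $\E$ versus $\M$ at each corner, and ensuring the square to which diagonalization is applied is genuinely commutative, is where care is needed, but no new hypotheses beyond the stated pullbacks, pushouts, and existence of both minimizations should be required.
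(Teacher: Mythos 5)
Your high-level strategy---establish via the closure facts that each result is both $\M$-minimal and $\E$-minimal, then argue that any comparison morphism between the two results lies in $\E\cap\M$ and is therefore an isomorphism (\itemref{rem:EM}{rem:EM:iso})---is also the paper's strategy, and that last step is correct as you state it. But there is a genuine gap exactly where you flagged the main obstacle: the comparison morphism is never actually constructed, and the morphisms you anchor its construction on do not exist. The $\M$-minimization is a morphism $m\colon RC\monoto C$ \emph{into} $C$; there is no canonical morphism $C\epito RC$ (your parenthetical ``an $\E$-morphism, since $RC$ is the $\M$-minimization viewed appropriately'' is false---in the coalgebra instance, a coalgebra has no canonical homomorphism onto its reachable part). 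Hence the ``canonical maps'' $C\to Q(R(C))$ and $C\to R(Q(C))$ that you propose to factor do not exist either: each result is connected to $C$ only by a zigzag, $C\leftarrow RC\rightarrow QRC$ and $C\rightarrow QC\leftarrow RQC$. For the same reason, ``the square built from $C\epito QC$ and $RC\monoto C$'' is not a commutative square at all---these two morphisms simply compose to a single arrow $RC\to QC$---so \itemref{D:factSystem}{diagonalization} cannot be applied to it as described.

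That composite $e\cdot m\colon RC\to QC$ is, however, precisely the missing bridge, and it is how the paper proceeds: since $RC$ is $\M$-minimal, the coreflection of \autoref{MminCorefl} (using the pullbacks along $\M$-morphisms) factors $e\cdot m$ through the $\M$-minimization $t\colon RQC\monoto QC$, yielding $h\colon RC\to RQC$ with $t\cdot h=e\cdot m$; by the dual of \autoref{MminQuot}, $RQC$ is $\E$-minimal as an $\M$-subobject of the $\E$-minimal $QC$ (note that this closure fact uses the pushouts along $\E$-morphisms, not the pullbacks along $\M$-morphisms as you wrote), so the reflection factors $h$ through $s\colon RC\epito QRC$, yielding $\phi\colon QRC\to RQC$; your $\E\cap\M$ argument, applied to $\phi$, then finishes the proof. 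So your ingredients are the right ones, but without rerouting the universal-property argument away from the non-existent maps out of $C$ and onto the composite $RC\to QC$ between the two intermediate objects, the argument cannot get started.
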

\begin{proof}
  In the first approach,
  denote the $\M$-minimization of $C$ by $m\colon R\monoto C$ and its
  $\E$-minimization by $s\colon R \epito V$. In the other approach, denote the
  $\E$-minimization of $C$ by $e\colon C\epito Q$ and its $\M$-minimization by
  $t\colon W\monoto Q$:
  \[
    \begin{tikzcd}
      R
      \arrow[>->]{rr}{m}
      \arrow[->>]{d}[swap]{s}
      & & C
      \arrow[->>]{d}{e}
      \\
      V
      & W
      \arrow[>->]{r}{t}
      & Q
    \end{tikzcd}
  \]
  We need to prove that $V$ and $W$ are isomorphic, making the above
  (then-closed) square commute. The $\M$-minimal objects form a coreflective
  subcategory (\autoref{MminCorefl}), so $e\cdot m$, whose domain is
  $\M$-minimal, factorizes through the $\M$-minimization of the codomain of
  $e\cdot m$, i.e.~we have $h\colon R\to W$ with $t\cdot h = e\cdot m$. Since
  $Q$ is $\E$-minimal, its $\M$-subobject $W$ is also $\E$-minimal in $\K^\dual$ (\autoref{MminQuot}). The
  $\E$-minimal objects form a reflective subcategory (\autoref{MminCorefl}).
  Applying the reflection to $h\colon R\to W$, we obtain $\phi\colon V\to W$ with $h =
  \phi\cdot s$. Since $V$ is $\E$-minimal (in $\K^\dual$), $\phi$ is in $\M$,
  and since $W$ is $\M$-minimal, $\phi$ is in $\E$, and thus $\phi$ is an isomorphism.
\end{proof}
\noindent In the concrete case of $F$-coalgebras, a coalgebra that is both simple and reachable is called a \emph{well-pointed}
coalgebra (see \cite[Section 3.2]{amms13}).
\begin{instance}
  If $F\colon \C\to \C$ fulfils all assumptions from the previous
  \autoref{exExistsMinimization} (and in particular preserves inverse images),
  then the construction of the simple quotient and the reachability construction
  for $F$-coalgebras can be performed in any order, yielding the same well-pointed
  coalgebra. 
\end{instance}

If $F$ does not preserve inverse images, then in the construction of the simple
quotient, transitions may cancel out each other and this may affect the
reachability of states. We have seen an example for this in
\itemref{exQuotClosed}{exQuotClosed:Reach} where performing reachability first
and observability second leads to a simple coalgebra in which states are unreachable, i.e.~the result is not well-pointed.
Hence, in
contrast to the well-known automata minimization procedure, the minimization of
a coalgebra in general has to be performed by first computing its simple quotient and secondly
computing the reachable part in the simple quotient.

If $F$ preserves inverse images, such as the functor for automata, any order is
fine. In sets, the reachability computation is a simple breadth-first
search~\cite{wmkd20reachability}, and hence runs in linear time. On the other
hand, existing algorithms for computing the simple quotient for many $\Set$-functors run
in at least $n\cdot \log n$ time where $n$ is the size of the
coalgebra~\cite{GrooteEA21,WissmannDMS19}. Hence, the reachability analysis
should be done first whenever possible.

\section{Conclusions}
We have seen a common ground for minimality notions in a category with various
instances in a coalgebraic setting. The abstract
results about the uniqueness and the existence of the minimization instantiate
to the standard results for reachability and observability of coalgebras.
Most of the general results even hold if the $(\E,\M)$-factorization system is
not proper. The tree unravelling of an automaton is an instance of
minimization for a non-proper factorization system.

It remains for future work to relate the efficient algorithmic approaches to the
minimization tasks: reachability is computed by breadth-first
search~\cite{wmkd20reachability,BarloccoKR19} and observability is computed by
partition refinement algorithms~\cite{KonigK14,WissmannDMS19,DorschMSW17}.
Even though their run-time complexity differs -- reachability is usually linear,
whereas partition refinement algorithms are quasilinear or slower -- they have
striking similarities.
All these algorithms compute a chain of subobjects resp.~quotients on the carrier of
the input coalgebra and terminate at the first element of the chain admitting a
coalgebra structure compatible with the input coalgebra. It is thus likely that
this relation can be made formal. A similar connection between the reachability of
algebras and partition refinement on coalgebras is already known~\cite{Rot16}.

\bibliography{refs}

\ifthenelse{\boolean{withappendix}}{
\appendix
\section*{Appendix: Omitted Proofs}
\addcontentsline{toc}{section}{Appendix}%

\begin{proofappendix}{cor:coalg-colims}
  Let $c_i\colon UDi\to FUDi$ be the coalgebra structure of $Di\in \Coalg(F)$
  for every $i\in \D$.
  For the colimit of $UD\colon \D\to \C$
  \[
    \begin{tikzcd}
      UDi
      \arrow{r}{\inj_i}
      & \colim (UD)
    \end{tikzcd}
    \qquad\text{for every }i\in \D
  \]
  apply $F$ and precompose with $c_i$, yielding
  \[
    \begin{tikzcd}
      UDi
      \arrow{r}{c_i}
      &
      FUDi
      \arrow{r}{F\inj_i}
      & F\colim (UD)
    \end{tikzcd}
    \qquad\text{for every }i\in \D.
  \]
  This is a cocone for the diagram $D$ because for all $h\colon i\to j$ in $\D$
  the outside of the following diagram commutes:
  \[
    \begin{tikzcd}
      UDi
      \arrow{r}{c_i}
      \descto{dr}{$Dh$ coalgebra \\ morphism}
      \arrow{d}[swap]{UDh}
      &[8mm]
      FUDi
      \arrow{rd}{F\inj_i}
      \arrow{d}{FUDh}
      \\
      UDj
      \arrow{r}[swap]{c_j}
      &
      FUDj
      \arrow{r}[swap]{F\inj_j}
      & F\colim (UD)
    \end{tikzcd}
  \]
  Thus we obtain a coalgebra structure $u\colon \colim(UD)\to F\colim(UD)$.
  Since $u$ is a cocone-morphism, every $\inj_i$ is an $F$-coalgebra morphism.

  In order to show that $(\colim(UD), u)$ is the colimit of $D\colon \D\to
  \Coalg(F)$, consider another cocone $(m_i\colon Di\to (E,e))_{i\in \D}$.
  \[
    \begin{tikzcd}
      \colim (UD)
      \arrow{d}[swap]{u}
      \arrow[dashed]{r}{w}
      & E \arrow{d}{e} \\
      F\colim(UD)
      \arrow{r}{Fw}
      & FE
    \end{tikzcd}
  \]
  In $\C$, we obtain a cocone morphism $w\colon \colim (UD)\to E$. With a
  similar verification as before,  $(e\cdot m_i\colon UDi\to FE)_{i\in \D}$ is a
  cocone for $D$, and thus both $d\cdot w$ and $Fw\cdot u\colon \colim (UD)\to
  FE$ are cocone morphisms (for $UD$). Since $\colim (UD)$ is the colimit, this
  implies that $d\cdot w = Fw\cdot u$, i.e.~$w\colon (colim (UD), u)\to (E,e)$ is
  a coalgebra morphism. Since $U\colon \Coalg(F)\to \C$ is faithful, $w$ is the
  unique cocone morphism, and so $(\colim UD,u)$ is indeed the colimit of~$UD$.
  \qed
\end{proofappendix}

\begin{proofappendix}{EM:pullbackwide}
    Consider the $(\E,\M)$-factorization of $\pr_j$ into $e\colon
    P\epito C$ and $m\colon C\monoto A_j$ with $\pr_j = m\cdot e$. On the image,
    we define a cone structure $(c_i\colon C\to A_i)_{i\in I}$ by $c_j = m$ and
    for every $i\in I\setminus\{j\}$ by the diagonal fill-in:
    \[
      \begin{tikzcd}
        P
        \arrow[->>]{r}{e}
        \arrow{d}[swap]{\pr_i}
        & C
        \arrow[>->]{r}{c_j}
        \arrow[dashed]{dl}{c_i}
        & A_j
        \arrow{d}{f_j}
        \\
        A_i
        \arrow[>->]{rr}{f_i}
        & & B
      \end{tikzcd}
      \quad\text{for all }i\in I\setminus\{j\}.
    \]
    The diagonal $c_i$ is induced, because $f_i\in \M$ for all $i\in I\setminus\{j\}$.
    The family $(c_i)_{i\in I}$ forms a cone for the wide pullback, because for
    all $i,i'\in I$ we have $f_i\cdot c_i = f_j\cdot c_j = f_{i'}\cdot c_{i'}$.
    This makes $e$ a cone morphism, because $c_i\cdot e = \pr_i$ for
    all $i\in I$.
    Moreover, the limiting cone $P$ induces a cone morphism $s\colon C\to P$ and
    we have $s\cdot e = \id_P$. Consider the commutative diagrams:
    \[
      \begin{tikzcd}
        P
        \arrow[->>]{rr}{e}
        \arrow{dd}[swap]{e}
        \arrow{dr}[description,shape=circle,inner sep=1pt]{\id_P}
        & & C
        \arrow{dl}[description,shape=circle,inner sep=2pt]{s}
        \arrow{dd}{c_j}
        \\
        & P
        \arrow{dr}[description,shape=circle,inner sep=2pt]{\pr_j}
        \arrow{dl}[description,shape=circle,inner sep=2pt]{e}
        \descto{r}{\((*)\)}
        \descto{d}{\((*)\)}
        & {}
        \\
        C
        \arrow[>->]{rr}[swap]{c_j}
        & {} & A_j
      \end{tikzcd}
      \quad\text{ and }\quad
      \begin{tikzcd}
        P
        \arrow[->>]{rr}{e}
        \arrow{dd}[swap]{e}
        & & C
        \arrow{dd}{c_j}
        \arrow{ddll}[description,shape=circle,inner sep=1pt]{\id_C}
        \\
        & \phantom{P}
        \\
        C
        \arrow[>->]{rr}{c_j}
        & {} & A_j.
      \end{tikzcd}
    \]
    The parts marked by $(*)$ commute because $e$ and $s$ are cone morphisms.
    Since the diagonal fill-in in \itemref{D:factSystem}{diagonalization} is unique, we have $e\cdot s = \id_C$. Thus,
    $e$ is an isomorphism, and $\pr_j = c_j\cdot e$ is in $\M$, as desired.
    \qed
\end{proofappendix}

\begin{proofappendix}{coalgMono}
  It is clear that every $\Mono$-carried homomorphism is monic in $\Coalg(F)$.
  Conversely, let $m\colon (C,c)\to (D,d)$ be a monomorphism in $\Coalg(F)$. Let
  $\pr_1,\pr_2\colon K\to C$ be a weak kernel pair of $m$. Since $F$ preserves
  weak kernel pairs, $F\pr_1,F\pr_2\colon FK\to FC$ is a weak kernel pair of
  $Fm\colon FC\to FD$. This induces some cone morphism $k\colon K\to FK$ making
  $\pr_1$ and $\pr_2$ coalgebra morphisms $(K,k)\to (C,c)$:
  \[
    \begin{tikzcd}
      K
      \arrow[shift left=1]{r}{\pr_1}
      \arrow[shift right=1]{r}[swap]{\pr_2}
      \arrow[dashed]{d}{k}
      & C
      \arrow{d}{c}
      \arrow{r}{m}
      & D
      \arrow{d}{d}
      \\
      FK
      \arrow[shift left=1]{r}{F\pr_1}
      \arrow[shift right=1]{r}[swap]{F\pr_2}
      & FC
      \arrow{r}{Fm}
      & FD
    \end{tikzcd}
  \]
  Since $m$ is monic in $\Coalg(F)$, this implies that $\pr_1=\pr_2$. For the verification that $m$ is a monomorphism in $\C$, consider $f,g\colon X\to C$ with $m\cdot f=m\cdot g$. Since $\pr_1,\pr_2$ is a weak kernel pair, it induces some cone morphism $v\colon X\to K$, fulfilling $f= \pr_1\cdot v$ and $g=\pr_2\cdot v$. Since, $\pr_1=\pr_2$, we find $f=g$
  as desired.
  \qed
\end{proofappendix}

\begin{proofappendix}{monadfactor}
  Denote the unit and multiplication of the monad $T$ by $\eta\colon \Id\to T$
  and $\mu\colon TT\to T$, respectively.
  Consider an $T$-algebra homomorphism $f\colon (A,a)\to (B,b)$ for Eilenberg-Moore
  algebras $(A,a)$ and $(B,b)$ and denote its image factorization in $\Alg(T)$
  by $(I,i)$, with homomorphisms $e\colon (A,a)\epito (I,i)$ and $m\colon
  (I,i)\monoto (B,b)$. We verify that $i\colon TI\to I$ is an Eilenberg-Moore algebra.
  \begin{itemize}
  \item First, we verify $i\cdot \eta_I = \id_I$ by showing that both $i\cdot
    \eta_I$ and $\id_I$ are both diagonals of the following square:

    \[
      \begin{tikzcd}
        A
        \arrow[->>]{rr}{e}
        \arrow[->]{dd}[swap]{e}
        & & I
        \arrow[->]{dd}{m}
        \arrow[->]{ddll}{\id_I}
        \\
        \\
        I
        \arrow[>->]{rr}{m}
        && B
      \end{tikzcd}
      \qquad
      \begin{tikzcd}
        A
        \arrow[->>]{rrr}{e}
        \arrow{dr}{\eta_A}
        \arrow{dd}[swap]{\id_A}
        & {} \descto{dr}{(N)}
        & & I
        \arrow{dl}{\eta_I}
        \arrow{d}{m}
        \\
        {} \descto{r}{(A)}
        & TA
        \arrow{r}{Te}
        \arrow{dl}{a}
        & TI
        \descto{r}{(N)}
        \arrow{d}{Tm}
        \arrow{dldl}[pos=0.60]{i}
        & B
        \arrow{dl}{\eta_B}
        \arrow{dd}{\id_B}
        \\
        A
        \arrow{d}[swap]{e}
        &
        \descto{u}{(H)}
        \descto{r}{(H)}
        & TB
        \arrow{dr}{b}
        & \descto{l}{(A)}
        \\
        I
        \arrow[>->]{rrr}{m}
        & & & B
      \end{tikzcd}
    \]
    The left-hand square commutes trivially, and the right-hand square commutes
    because $\eta$ is natural (N), because $e$ and $m$ are $T$-algebra
    homomorphisms (H), and because $(A,a)$ and $(B,b)$ are Eilenberg-Moore
    algebras (A).
    By the uniqueness of the diagonal lifting property, we obtain $i\cdot \eta_I
    = \id_I$.

  \item Next, we verify $i\cdot Ti = i\cdot \mu_i$ by showing that both are
    diagonals in same square. To this end, let $(s_A,s_I,s_B) \in \{ (Ta,Ti,Tb),
    (\mu_A,\mu_I,\mu_B)\}$, i.e.~we obtain one diagonal for $(Ta,Ti,Tb)$ and one
    for $(\mu_A,\mu_I,\mu_B)$:
    \[
      \begin{tikzcd}
        TTA
        \arrow[->>]{rrr}{TTe}
        \arrow{d}[swap]{\mu_A}
        \arrow{dr}{s_A}
        & {} \descto{dr}{(N)} & & TTI
        \arrow{dl}[swap]{s_I}
        \arrow{d}{TTm}
        \\
        TA
        \descto{r}{(A)}
        \arrow{d}[swap]{a}
        & TA
        \arrow{r}{Te}
        \arrow{ld}[swap]{a}
        & TI
        \arrow{dldl}[pos=0.60]{i}
        \arrow{d}[description]{Tm}
        \descto{r}{(N)}
        & TTB
        \arrow{dl}[description]{s_B}
        \arrow{d}{\mu_B}
        \\
        A
        \arrow{d}[swap]{e}
        & {} \descto{u}{(H)}
        \descto{r}{(H)}
        & TB
        \arrow{rd}[swap]{b}
        \descto{r}{(A)}
        & TB
        \arrow{d}{b}
        \\
        I
        \arrow[>->]{rrr}{m}
        & & & B
      \end{tikzcd}
    \]
    In both cases, we have that $e$ and $m$ are $T$-algebra homomorphisms $(H)$.
    For $(s_A,s_I,s_B) := (Ta,Ti,Tb)$, we have that $(A,a)$ and
    $(B,b)$ are Eilenberg-Moore algebras (A) and that $e$ and $m$ are
    homomorphisms (N). For $(s_A,s_I,s_B) := (\mu_A,\mu_I,\mu_B)$, we have that
    the parts (A) commute trivially, and the parts (N) commute because $\mu$ is natural.
    Since $T$ preserves $\E$, we have $TTe\in \E$, and thus
    by the uniqueness of the diagonal, we obtain $i\cdot Ti = i\cdot \mu_I$.
  \end{itemize}
  Thus, $(I,i)$ fulfils the axioms of an Eilenberg-Moore algebra. The remaining
  properties of the factorization system hold because the Eilenberg-Moore
  category is a full subcategory of $\Alg(T)$.
  \qed
\end{proofappendix}

\begin{proofappendix}{minimalSubterminal}
  We verify the postulated equivalence using that $C$ is $\M$-minimal iff every
  $h\colon B\to C$ is an epimorphism (\autoref{minimalInE}, $\E=\Epi$).
  \begin{itemize}
  \item For \textqt{if}, we verify that every $h\colon B\to C$ is an
    epimorphism: for $u,v\colon C\to D$ with $u\cdot h=v\cdot h$, we directly
    obtain $u=v$ by assumption. Thus, $h$ is an epimorphism.
  \item For \textqt{only if}, consider $u,v\colon  C\to D$ and take a weak
    equalizer $e\colon E\to C$; hence, $u\cdot e= v\cdot e$. Since $e$ is an
    epimorphism (by minimality), we obtain $u=v$.
  \qed
  \end{itemize}
\end{proofappendix}

}{}

\end{document}